\newif\iflncs
\lncsfalse
\newif\ifnotes
\notestrue

\iflncs
\documentclass[orivec,runningheads,envcountsect,envcountsame,a4paper]{llncs}
\else
\documentclass[11pt]{article}
\fi

\iflncs
\usepackage{breakcites}
\usepackage[T1]{fontenc}
\else
\usepackage{amsthm}
\usepackage{fullpage}
\fi
\usepackage{subcaption}
\usepackage{algorithm}
\usepackage[noend]{algpseudocode}
\usepackage{setspace}
\usepackage{amsfonts}
\usepackage{amsmath}
\usepackage{bbm}
\usepackage{complexity}
\usepackage{mdframed}
\usepackage{framed}
\usepackage{float}
\usepackage{array}
\usepackage{enumitem}
\usepackage[dvipsnames]{xcolor} 
\usepackage{graphicx}
\usepackage{tikz}
\usepackage{multirow}
\usepackage{pdfpages}
\usepackage{mathtools}

\iflncs
\usepackage[colorlinks=true,linkcolor=blue,citecolor=blue]{hyperref}
\else
\usepackage[colorlinks=true,linkcolor=Maroon,citecolor=Maroon]{hyperref}
\fi
\usepackage{thm-restate}
\usepackage{orcidlink}                    
\usepackage{algpseudocode} 
\usepackage[capitalize]{cleveref}  

\newcommand{\remove}[1]{}

\ifnotes
\newcommand{\knote}[1]{[{\footnotesize \textsf{\color{red}{\bf Kel Zin:} { {#1}}}}]}
\definecolor{ao}{rgb}{0.0, 0.5, 0.0}
\newcommand{\pnote}[1]{[{\footnotesize \textsf{\color{cyan}{\bf Prashant:} { {#1}}}}]}
\newcommand{\dnote}[1]{[{\footnotesize \textsf{\color{blue}{\bf Divesh:} { {#1}}}}]}
\newcommand{\hnote}[1]{[{\footnotesize \textsf{\color{orange}{\bf Hai:} { {#1}}}}]}
\newcommand{\rnote}[1]{[{\footnotesize \textsf{\color{purple}{\bf Rishav:} { {#1}}}}]}
\else
\newcommand{\knote}[1]{}
\definecolor{ao}{rgb}{0.0, 0.5, 0.0}
\newcommand{\pnote}[1]{}
\newcommand{\dnote}[1]{}
\newcommand{\hnote}[1]{}
\newcommand{\rnote}[1]{}
\fi
\definecolor{mygreen}{RGB}{0,128,0} 
\newcommand{\resolved}[1]{{\color{mygreen}{[Resolved]}}}

\iflncs
\input{lncs_issues}

\spnewtheorem{claim}[theorem]{Claim}{\bfseries}{\itshape}
\else

\newtheorem{theorem}{Theorem}[section]

\newtheorem{claim}[theorem]{Claim}
\newtheorem{corollary}[theorem]{Corollary}

\newtheorem{definition}[theorem]{Definition}
\theoremstyle{definition}
\newtheorem{remark}[theorem]{Remark}
\fi

\iflncs
\renewcommand{\paragraph}[1]{\;\newline \noindent \textbf{#1}}
\fi

\newenvironment{proofof}[1]{\begin{proof}[\textit{Proof of #1}]}{\end{proof}}

\crefname{definition}{Definition}{Definitions}
\crefname{sub-definition}{Definition}{Definitions}
\crefname{example}{Example}{Examples}
\crefname{exercise}{Exercise}{Exercises}
\crefname{property}{Property}{Properties}
\crefname{question}{Question}{Questions}
\crefname{solution}{Solution}{Solutions}
\crefname{theorem}{Theorem}{Theorems}
\crefname{informaltheorem}{Theorem}{Theorems}
\crefname{proposition}{Proposition}{Propositions}
\crefname{problem}{Problem}{Problems}
\crefname{lemma}{Lemma}{Lemmas}
\crefname{conjecture}{Conjecture}{Conjectures}
\crefname{corollary}{Corollary}{Corollaries}
\crefname{fact}{Fact}{Facts}
\crefname{claim}{Claim}{Claims}
\crefname{remark}{Remark}{Remarks}
\crefname{note}{Note}{Notes}
\crefname{figure}{Figure}{Figure}
\crefname{case}{Case}{Cases}
\crefname{proofsketch}{Proof Sketch}{Proof Sketches}
\RequirePackage{algorithm} 
\RequirePackage{algpseudocode}

\makeatletter
\newcommand{\linelabel}[1]{%
  \begingroup
  \def\@currentcounter{ALG@line}%
  \label{#1}%
  \endgroup
}
\makeatother
\crefname{ALG@line}{line}{lines}
\Crefname{ALG@line}{Line}{Lines}

\def\ddefloop#1{\ifx\ddefloop#1\else\ddef{#1}\expandafter\ddefloop\fi}

\def\ddef#1{\expandafter\def\csname bb#1\endcsname{\ensuremath{\mathbb{#1}}}}
\ddefloop ABCDEFGHIJKLMNOPQRSTUVWXYZ\ddefloop

\def\ddef#1{\expandafter\def\csname c#1\endcsname{\ensuremath{\mathcal{#1}}}}
\ddefloop ABCDEFGHIJKLMNOPQRSTUVWXYZ\ddefloop

\def\ddef#1{\expandafter\def\csname v#1\endcsname{\ensuremath{\overline{#1}}}}
\ddefloop ABCDEFGHIJKLMNOPQRSTUVWXYZabcdefghijklmnopqrstuvwxyz\ddefloop

\newcommand{\algfont}[1]{\mathsf{#1}}
\def\ddef#1{\expandafter\def\csname alg#1\endcsname{\ensuremath{\algfont{#1}}}}
\ddefloop ABCDEFGHIJKLMNOPQRSTUVWXYZ\ddefloop

\newcommand{\langfont}[1]{\mathnormal{#1}}
\def\ddef#1{\expandafter\def\csname lang#1\endcsname{\ensuremath{\langfont{#1}}}}
\ddefloop ABCDEFGHIJKLMNOPQRSTUVWXYZ\ddefloop

\def\ddef#1{\expandafter\def\csname v#1\endcsname{\ensuremath{\boldsymbol{\csname #1\endcsname}}}}
\ddefloop {alpha}{beta}{gamma}{delta}{epsilon}{varepsilon}{zeta}{eta}{theta}{vartheta}{iota}{kappa}{lambda}{mu}{nu}{xi}{pi}{varpi}{rho}{varrho}{sigma}{varsigma}{tau}{upsilon}{phi}{varphi}{chi}{psi}{omega}{Gamma}{Delta}{Theta}{Lambda}{Xi}{Pi}{Sigma}{varSigma}{Upsilon}{Phi}{Psi}{Omega}{ell}\ddefloop

\newcommand{\eps}{\varepsilon}

\newcommand{\Field}[1]{\mathbb{F}_{#1}}
\newcommand{\F}{\mathbb{F}}

\DeclareMathOperator*{\Exp} {{\mathbf{E}}}

\newcommand{\pr}[1]         {\Pr\left[ #1 \right]}

\let\exponential\exp
\renewcommand{\exp}[1]      {\Exp\left[ #1 \right]}

\renewcommand{\poly}{\mathrm{poly}}

\DeclareMathOperator{\supp} {Supp}

\newcommand{\abs}[1]        {\left| #1\right|}

\newcommand{\ceil}[1]       {\left\lceil #1 \right\rceil}

\usepackage{listofitems}
\newcommand\cycle[2][\,]{%
  \readlist\thecycle{#2}%
  (\foreachitem\i\in\thecycle{\ifnum\icnt=1\else#1\fi\i})%
}

\newcommand{\Dsd}{\ensuremath{\algD_{\text{SD}}}}
\newcommand{\Ssd}{\ensuremath{\algS_{\text{SD}}}}
\newcommand{\LPN}{\mathsf{LPN}}
\newcommand{\Dlpn}{\ensuremath{\algD_{\text{LPN}}}}
\newcommand{\Slpn}{\ensuremath{\algS_{\text{LPN}}}}

\newcommand{\Ber}{ \mathsf{Ber}}

\newcommand{\wt}{\mathrm{wt}}

\begin{document}

\title{Towards Worst-case Hardness for Low-Noise LPN}
\author{Divesh Aggarwal\thanks{\orcidlink{0000-0002-3841-0262} Email: \href{mailto:divesh@comp.nus.edu.sg}{divesh@comp.nus.edu.sg}} 
\and Rishav Gupta\thanks{\orcidlink{0009-0002-4424-5432} Email: \href{mailto:rishavg@u.nus.edu}{rishavg@u.nus.edu}}
\and Hai Hoang Nguyen\thanks{\orcidlink{0000-0003-2371-6231} Email: \href{mailto:hai.h.nguyen@nus.edu.sg}{hai.h.nguyen@nus.edu.sg}}
\and Kel Zin Tan\thanks{\orcidlink{0009-0008-1099-0892} Email: \href{mailto:kelzin@u.nus.edu}{kelzin@u.nus.edu}} 
\and Prashant Nalini Vasudevan\thanks{\orcidlink{0000-0001-6880-795X} Email: \href{mailto:prashvas@nus.edu.sg}{prashvas@nus.edu.sg}}
\\\\ 
National University of Singapore
}
\date{\today}

\pagenumbering{roman}

\maketitle

\begin{abstract}

The hardness of the Learning Parity with Noise (LPN) problem is a foundational assumption in cryptography, forming the basis of constructions ranging from symmetric-key primitives to public-key encryption and beyond. A central open question is whether the average-case hardness of LPN can be based on worst-case complexity assumptions, as has been achieved for the analogous Learning With Errors (LWE) problem.

Existing worst-case-to-average-case reductions for LPN~\cite{EC:BLVW19, C:YuZha21} rely on statistical smoothing of linear codes, which inherently limits the resulting average-case hardness to noise rates as large as $1/2 - 1/\mathrm{poly}(n)$, which is insufficient for public-key applications.

We explore a new approach towards obtaining such reductions: rather than requiring that random sparse combinations of the rows of the generator matrix of a code be \emph{statistically} close to uniform, we only require that they be \emph{computationally indistinguishable} from uniform. This leads to a clean win-win structure: we show that any efficient LPN solver can be transformed into a pair of efficient algorithms $(S, D)$ such that for \emph{every} matrix $A$ of appropriate dimensions over $\F_2$, either $S$ decodes the code generated by $A$ from random noise, or $D$ distinguishes random noisy codewords of the dual of this code from uniform.

By instantiating this reduction with appropriate parameters, we obtain the average-case hardness of LPN with inverse-polynomial noise rate~$n^{-\alpha}$ for any constant $\alpha < 1$, assuming the worst-case simultaneous hardness of decoding a code from random noise and distinguishing random noisy codewords of its dual from uniform. In particular, setting $\alpha = 1/2$, our reduction yields LPN hardness in the parameter regime required for Alekhnovich's construction of public-key encryption~\cite{FOCS:Alekhnovich03}, a regime that was previously inaccessible via worst-case reductions.



\end{abstract}


\newpage
\tableofcontents
\newpage

\pagenumbering{arabic}

\section{Introduction}
\label{sec:intro}

Learning Parity with Noise (LPN)~\cite{C:BFKL93} is essentially the problem of decoding random noisy codewords of a random linear code. An instance consists of a uniformly random matrix $A \gets \F_2^{m\times n}$ and a vector $A\cdot s + e \in \F_2^m$, where $s\gets\F_2^n$ is a uniformly random vector and $e\gets Ber(\eta)^m$ is a vector of Bernoulli variables, where $\eta\in(0,1/2)$ is called the \emph{noise rate}.
In the Search LPN problem, the task is to recover $s$ given $(A,As+e)$. In the Decision LPN problem, the task is to distinguish this distribution from $(A,b)$, where $b$ is an independent uniformly random vector in $\F_2^m$. These variants are known to be equivalent~\cite{C:AppIshKus07,C:MicMol11}.

The LPN problem has been of significant consequence to cryptography since its first consideration by Blum et al.~\cite{C:BFKL93}. It is one of the few known sources of secure Public-Key Encryption (PKE)~\cite{FOCS:Alekhnovich03, C:YuZha16}, and has been used to construct various advanced cryptographic primitives (see, for instance, the survey by Pietrzak~\cite{Pie12}). It may be interpreted as the problem of solving a system of noisy linear equations, and for large enough $\eta$, worst-case versions of this problem are known to be $\NP$-hard~\cite{AK14}. The best algorithms we have for LPN run in time $2^{O(\eta\cdot n)}$ for any non-trivial $m$~\cite{Prange62,Stern88,Dumer91,AC:MayMeuTho11,EC:BJMM12,EC:MayOze15,WCC:BotMay17,PQCRYPTO:BotMay18,EC:DEEK24}, or in time $2^{O(n/\log{n})}$ if $m > 2^{\Omega(n/\log{n})}$~\cite{JACM:BluKalWas03}, or $2^{O(n/\log\log{n})}$ if $m > n^{1+\Omega(1)}$~\cite{RANDOM:Lyubashevsky05}.

In many ways, LPN is similar to its analogue in the Euclidean metric, the Learning With Errors (LWE) problem~\cite{JACM:Regev09}. One notable difference, however, is that the (average-case) hardness of the LWE problem is supported by the worst-case hardness of lattice problems~\cite{JACM:Regev09,STOC:Peikert09}, leading to a construction of PKE and other primitives from the worst-case hardness of these problems. With LPN, the worst-case to average-case reductions we have are much more limited.

\paragraph{Known Reductions for LPN.} The only known method for reducing from the plausible worst-case hardness of any problem to LPN is that of Brakerski et al.~\cite{EC:BLVW19}. Their reduction is from a worst-case version of the LPN problem itself, called the Nearest Codeword Problem (NCP). Here, given a matrix $A\in\F_2^{m\times n}$ and vector $As+e\in\F_2^m$ for some $e$ guaranteed to have Hamming weight at most $\eta\cdot m$, the problem is to recover $s$. They show that the worst-case hardness of NCP (with some minor restrictions) with noise rate $\eta = O(\log^2{n}/n)$ implies the average-case hardness of LPN with noise rate $(1/2-1/\poly(n))$.

With additional careful analysis, Yu and Zhang~\cite{C:YuZha21} extend this reduction to sub-exponential-time algorithms. For example, they show that if NCP with noise rate $\eta = O(n^{-1/2})$ and $m = 2^{\Omega(n^{1/2})}$ is worst-case hard for algorithms running in time $2^{O(n^{1/2})}$, then LPN with constant noise-rate and similar values of $m$ is also hard for such algorithms. They also point out that the hardness of LWE with certain noise parameters implies hardness of an analogue of LPN over large fields with noise rate close to $1$.

Apart from worst-case-to-average-case reductions, hardness amplification results are also known for LPN~\cite{AGL26}. Roughly speaking, these results show that an algorithm solving LPN with secret length $k n$ and noise rate $k\eta$ with success probability $\varepsilon$ can be used to solve LPN with secret length $n$ and noise rate $\eta$ with success probability at least $1-\delta$, where $k=\Theta\!\left(\frac{1}{\delta}\log\frac{1}{\varepsilon}\right)$.

\paragraph{Limits of Known Reductions.} The aforementioned results are already quite remarkable as they imply the possibility of symmetric-key cryptography based on worst-case hardness assumptions. Unfortunately, they fall well short of providing the hardness of LPN required for public-key applications, which is hardness at noise rate $O(n^{-1/2})$~\cite{FOCS:Alekhnovich03} against polynomial-time algorithms, or at constant noise rate against $2^{\omega(n^{1/2})}$-time algorithms~\cite{C:YuZha16}.

Further, the worst-case hardness assumptions in their hypotheses are quite strong. Recall that \cite{EC:BLVW19} requires hardness of NCP with a noise rate of $O(\log^2{n}/n)$. This is only slightly larger than $O(\log{n}/n)$, at which rate NCP can be solved in polynomial time~\cite{Prange62}.
Further, if arbitrary pre-processing of the matrix $A$ is allowed, even NCP with noise rate $O(\log^2{n}/n)$ can be solved in polynomial time~\cite{ITCS:BCLV26}.

\medskip
The source of these limitations is as follows. The core of their reduction is, given an arbitrary matrix $A$, to consider the vector $r^T A$, where $r\in\F_2^m$ is a random vector with Hamming weight $w$. Clearly, if $w > n$ and $A$ is sufficiently non-degenerate, then $r^TA$ will be close to being uniformly random. They show that this is true even if $w = O(n/\log{n})$ and $m$ is a large enough polynomial in $n$. Then, given an NCP instance $(A,As+e)$, they generate the rows of their LPN instance as $(r^TA, r^T(As+e))$. The first part $r^T A$ is uniform as required, and the second part is $(r^TA)s + r^Te$, with $r^Te$ playing the part of the noise. This phenomenon of the distribution of $r^TA$ being close to uniform for an appropriate distribution of $r$ is referred to as \emph{smoothing},\footnote{Strictly speaking, ``smoothing'' actually refers to a dual phenomenon that this is known to be equivalent to~\cite{DCC:PatBar25}.} and has been studied in detail for this and other distributions of $r$, and in various metrics~\cite{EC:BLVW19,C:YuZha21,ITIT:DDRT23,PKC:DebRes25,DCC:PatBar25}.

This method of generating the instance causes the noise rate to increase. If $e$ had noise rate $\eta$, then $r^Te$ will have noise rate $\alpha \approx 1/2 - e^{-2\eta t}$. This $\alpha$ has to be non-negligibly removed from $1/2$ for the problem to be meaningful. So if $w$ is set to $\Omega(n/\log{n})$, then only noise rates $\eta$ between $\omega(\log{n}/n)$ and $O(\log^2n/n)$ are useful, and the resulting $\alpha$ will never be smaller than $1/2-o(1)$.

\medskip
One way to try to deal with this issue is to set $w$ to be something smaller. However, it is easy to check that if $w = o(n/\log{n})$, then for any $m$ that is polynomial in $n$, a random vector of Hamming weight $w$ does not have enough entropy for the distribution of $r^TA\in\F_2^n$ to be close to uniform. This remains true for other natural distributions such as Bernoulli vectors with expected Hamming weight $w$.

Another possibility is that instead of using the same distribution for every matrix $A$, one could try to use a distribution of $r$ specifically tailored to the given $A$ such that $r^TA$ will be uniform. However, without bounds on the Hamming weight of $r$, it is unclear that $r^Te$ will remain unbiased for all possible noise vectors $e$. In fact, it was recently shown by Pathegama and Barg~\cite{DCC:PatBar25} that, at least if $m = O(n)$ and the noise rate is a constant, there are matrices $A\in\F_2^{m\times n}$ for which there is no such distribution of $r$ that will satisfy both conditions: $r^TA$ is sufficiently close to uniform, and $r^Te$ has noise rate $1/2 - 1/\poly(n)$. Though not explicitly stated, their result also applies when $e$ is a random Bernoulli vector. This still leaves open the possibility of this approach working if $m$ is much larger, or if we start with hardness of NCP with a smaller noise rate, but pursuing it will require significantly new techniques.



\subsection{Our Results}
\label{sec:results}

Our proposal to improve on existing reductions is somewhat different, and is to weaken the notion of smoothing used -- instead of asking that $r^TA$ be statistically close to uniform, we only ask that it be \emph{computationally indistinguishable} from uniform. Interestingly, this again relates to decoding a linear code -- if we view $A^T$ as the parity-check matrix of its dual code, then $A^Tr$ may be regarded as the syndrome of $r$. If $B$ generates the dual code of $A$, then multiplying a noisy dual codeword $Bs+r$ by $A^T$ results in the codeword part being cancelled out, giving us: $A^T (Bs+r) = (A^TB)s + A^Tr = A^Tr$. So recovering $r$ given $r^TA$ is equivalent to decoding the dual code of $A$ with $r$ as the noise vector, and distinguishing $r^TA$ from uniform corresponds to solving a decision version of the decoding problem for the dual code of $A$. 

This now leads to a win-win situation. If $r^TA$ is computationally indistinguishable from uniform, then the existing smoothing-based reduction approach can be used to generate \emph{pseudorandom} LPN samples of the form $(r^TA,r^T(As+e))$, which can then be given to an LPN solver. Since this distribution is pseudorandom, the LPN solver will continue to work and find the secret $s$. On the other hand, if $r^TA$ is \emph{not} pseudorandom, then the distinguisher that breaks pseudorandomness is essentially solving a decision version of the decoding problem for the dual code of $A$.

\medskip
Following this observation, we show that the hardness of LPN can be derived from the worst-case simultaneous hardness of decoding both a code and its dual from random noise. An informal version of this reduction is described below. We refer to each row of $(A,As+e)$ as a \emph{sample} of LPN, and each bit of the secret $s$ as a \emph{variable}.

\begin{theorem}[Informal, see \cref{thm:main} and \cref{cor:lpnmain}]
  \label{infthm:main}
  Suppose there is an efficient algorithm that has non-negligible advantage in solving LPN with $n$ variables, $t$ samples, and noise rate $n^{-\alpha}$ for some constant $\alpha < 1$. Then, for any constant $\beta > 0$ such that $\alpha + \beta < 1$, and $m = O(t^2n^{2\beta})$, there are efficient algorithms $\algS$ and $\algD$ such that, for any $A\in\F_2^{m\times n}$, either $\algS$ can decode the code generated by $A$ from random noise of rate $n^{-(\alpha+\beta)}$, or $\algD$ can distinguish noisy codewords of the dual of this code from random for noise rate $n^\beta/m$.
\end{theorem}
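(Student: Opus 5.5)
Given an LPN solver $\algL$ with $n$ variables, $t$ samples, noise rate $\eta = n^{-\alpha}$ and advantage $\varepsilon$, I will build $\algS$ and $\algD$ around one sampling procedure. The input is a matrix $A\in\F_2^{m\times n}$ and, in the decoding setting, a vector $b = As + e$ with $e \gets \Ber(n^{-(\alpha+\beta)})^m$. The procedure draws $t$ independent vectors $r_1,\dots,r_t \gets \Ber(n^\beta/m)^m$, so each has expected weight $n^\beta$. It forms the rows $(r_i^T A,\ r_i^T b)$ and feeds the resulting $t$-sample instance to $\algL$.

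Since $r_i^T b = (r_i^T A)s + r_i^T e$, the noise bit $r_i^T e$ is an XOR of about $n^\beta$ independent $\Ber(n^{-(\alpha+\beta)})$ bits. Its bias is
\[
\tfrac12\bigl(1-(1-2n^{-(\alpha+\beta)})^{\sim n^\beta}\bigr)\approx n^{-\alpha}.
\]
To get exactly rate $\eta$, I will add extra $\Ber$ noise, or tune the constants in the rate of $e$ (e.g.\ use rate $c\,n^{-(\alpha+\beta)}$) so the composed rate is at most $\eta$ and then top up. Conditioned on $r_i$ the noise bits are independent across $i$, because $e$ is fixed across samples. The main correlation issue is that the same $e$ is used for all $t$ samples. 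When the $r_i$ have disjoint supports, the bits $r_i^T e$ are independent given the $r_i$. With $m = \Theta(t^2 n^{2\beta})$, a birthday bound shows the supports of $t$ sets of size about $n^\beta$ in $[m]$ are pairwise disjoint except with probability $O(t^2 n^{2\beta}/m)$, which is a small constant (or $\varepsilon/4$ after choosing the constant in $m$). This is the origin of the stated bound on $m$. Given disjointness, the joint noise is exactly i.i.d.\ Bernoulli of the right rate, independent of the $r_i$.

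The remaining discrepancy is that $R A$ (rows $r_i^T A$) need not be uniform; here the win-win enters. I define the ``LPN-success predicate'': on matrix $M\in\F_2^{t\times n}$, sample $s,e'$, run $\algL(M, Ms+e')$ and check whether it succeeds. For decision LPN, I instead compare against $(M,u)$ and hybridize. Let $p_{\text{unif}}$ be the success probability when $M$ is uniform, so $p_{\text{unif}}\ge\varepsilon$ by assumption, and let $p_A$ be the probability when $M = RA$. If $p_A \ge \varepsilon/2$, then $\algS$, which is the procedure above followed by verification, recovers $s$. Search LPN and decision LPN are equivalent by the earlier cited results, so I will work with search and verify candidate secrets by checking $\wt(As'+b)$ against a threshold. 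Otherwise $p_{\text{unif}} - p_A \ge \varepsilon/2$. In that case $\algD$, given $t$ vectors that are either $A^T r_i$ or uniform, runs the predicate itself; it can do so because it samples $s,e'$ internally. This yields advantage $\varepsilon/2$ in distinguishing $t$ syndromes $A^Tr_i$ from uniform. Via the duality discussion in the introduction (take $B$ generating $\ker A^T$), $A^T(Bz+r) = A^T r$ and $A^T$ of a uniform vector is uniform on the column span. So this is equivalent to distinguishing noisy codewords $Bz + r$ of the dual code from uniform with noise rate $n^\beta/m$, and a hybrid argument reduces the $t$ samples to a single sample at a loss of a factor $t$.

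The step I expect to be the main obstacle is making $\algS$ and $\algD$ independent of $A$ while keeping the case split per $A$. The statement says that for every $A$ one of them works, which is fine since the threshold test is only in the analysis. There is a subtler point: the predicate run inside $\algD$ must use uniform $M$ versus $RA$, with the noise generated honestly. For $\algS$, the instance is only LPN-like when supports are disjoint, so I will bound that event's failure probability as above. A second delicate point is that $r_i^Te$ uses the real $e$ in $\algS$ but fresh $e'$ in the predicate. These agree in distribution only under disjointness and exact rate matching, so I will prove that the two distributions are within $O(t^2n^{2\beta}/m)$ in statistical distance. Degenerate cases, such as $A$ of small rank with $A^T r$ not spanning, are absorbed since uniform is defined over $\F_2^n$ and the dual-code phrasing handles it.
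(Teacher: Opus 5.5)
\textbf{Gap: collisions among the $r_i$ are not rare.} With $m=\Theta(t^2n^{2\beta})$, the probability that two of the $t$ supports intersect is a constant bounded away from $0$. The paper takes $m=\lceil 2t^2n^{2\beta}\rceil$, so $t^2n^{2\beta}/m\approx 1/2$. No choice of the constant in $m$ makes this probability $\varepsilon/4$. The solver's advantage is only non-negligible, e.g.\ $\varepsilon=1/\mathrm{poly}(n)$, so you would need $m=\Omega(t^2n^{2\beta}/\varepsilon)$. That violates the stated bound on $m$ and ties the worst-case problem's dimension to the unknown advantage. Your statistical-distance argument therefore only gives $\Pr[\algS\text{ succeeds}]\ge\varepsilon/2-\Theta(1)$, which is vacuous unless $\varepsilon$ is a large enough constant. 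You also cannot argue that the solver's success is independent of the collision event. The solver is arbitrary and $A$ is worst-case, so $RA$ may even reveal $R$. All of its $\varepsilon/2$ success mass on $(RA,RAs+\Ber(\tau)^t)$ may lie on colliding $R$, which is exactly where your instance carries correlated noise.

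\textbf{Missing idea: repair the collided rows.} $\algS$ sampled $R$ itself, so it can locate and repair the few collided rows instead of hoping there are none. Let $C$ be the set of rows whose support meets that of an earlier row. The rows outside $C$ are pairwise disjoint, so given $R$ their noise bits $r_j^Te$ are independent with the right rate. Using $t^2\theta^2m<1$, the paper shows by union bounds that, except with probability $o(\varepsilon)$:
\begin{itemize}
\item every column of $R$ has weight below $4\log_m(m/\varepsilon)$;
\item fewer than $\max(2\log(1/\varepsilon),3)$ columns have weight at least $2$.
\end{itemize}
Hence $|C|<\max(8\log_m(m/\varepsilon)\log(1/\varepsilon),12)$, and $\algS$ aborts otherwise. $\algS$ then enumerates every $v\in\F_2^{|C|}$, replaces the labels of the rows in $C$ by $v_j+\Ber(\tau)$, queries the solver, and accepts $s'$ only if $\wt(b-As')\le 2\eta m$. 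In the iteration where $v$ equals the true values $r_j^TAs$ for $j\in C$, the query is distributed exactly as $(RA,RAs+\Ber(\tau)^t)$. So the success probability is at least $\varepsilon/2-o(\varepsilon)$, at a cost of $2^{|C|}=(1/\varepsilon)^{O(\log_m(m/\varepsilon))}$ queries, which is polynomial for $\varepsilon=1/\mathrm{poly}(n)$.

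\textbf{What already matches.} The case split at $\varepsilon/2$ and the hybrid distinguisher losing a factor $t$ match the paper. Bernoulli $r_i$ with a weight-dependent noise top-up is a workable substitute for the paper's fixed-weight $r_i$, up to constant-factor slack in the rates. One caveat: your translation from syndromes to noisy dual codewords needs $A$ to have full column rank, since otherwise $A^Tu$ is uniform only on the row span of $A$. This is why the paper states the distinguishing task as $r^TA$ versus uniform $c\in\F_2^n$.
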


To the best of our knowledge, the complexity of decoding a generic linear code and decoding its dual are not known to be negatively correlated. So we expect the complexity of the task specified in the theorem to be the minimum of the worst-case complexity of an algorithm that can decode any code generated by an $m\times n$ matrix from random noise of rate $n^{-(\alpha+\beta)}$ and that of an algorithm that can decode any code generated by an $m\times (m-n)$ matrix from random noise of rate $n^\beta/m$. As long as $\alpha$ and $\beta$ are constants such that $\alpha + \beta < 1$, the best known algorithms for both of these tasks run in sub-exponential time.

\paragraph{Public-Key Cryptography.} The most remarkable consequence of our reduction is that, making reasonable assumptions about the hardness of decoding worst-case codes and their duals from random noise, we can get the hardness of LPN required for constructions of Public-Key Encryption using the Alekhnovich paradigm~\cite{FOCS:Alekhnovich03}. In particular, this requires hardness for LPN with $t = 2n$ samples and noise rate $n^{-1/2}$, which can be obtained by setting the parameters in the theorem to, for instance, $\alpha = 0.5$, $\beta = 0.4$, and $m = O(n^{2.4})$.

\paragraph{Challenges.} While the observations stated earlier that lead to our reduction are relatively straightforward, realising them fully turns out to be quite challenging. The primary obstacle here is that computational randomness is much harder to obtain and use than actual randomness. One detail that we have been glossing over in our discussion so far is that for the smoothing reduction approach to work, it is not sufficient to show that $r^TA$ is close to uniform -- one also needs to show that the joint distribution of $(r^TA,r^Te)$ is close to $(U_n,Ber(\alpha))$, where $U_n$ is the uniform distribution over $n$ bits and $\alpha$ is the noise rate of LPN one is reducing to. In the statistical setting of \cite{EC:BLVW19,C:YuZha21}, the Fourier analysis they use can easily show this using nearly the same argument that shows that $r^TA$ is close to uniform.

In our computational setting, however, we do not have powerful tools like Fourier analysis that can readily incorporate this additional biased bit. So even if we are promised that $r^TA$ is pseudorandom, and we also know that $r^Te$ is a Bernoulli variable with the appropriate parameter $\alpha$, it is non-trivial to argue that $(r^TA,r^Te)$ is computationally indistinguishable from $(U_n,Ber(\alpha))$. This is because the leakage $r^Te$ could lead to $r^TA$ becoming slightly distinguishable from uniform, and this could add up over many such samples to render the resulting distribution easily distinguishable from an LPN instance.

It is due to this issue that our reduction only produces an $\algS$ that decodes from random noise, whereas the earlier smoothing reductions could produce algorithms that can decode from worst-case noise. We show that for most choices of the set of vectors $r$ chosen while constructing the LPN instance, even after fixing the $r$'s, the randomness in the $e$ ensures that the set of bits $r^Te$ are independent Bernoulli variables. This also requires us to use a different error model for the dual code -- the $r$'s are chosen to be vectors of some fixed Hamming weight, rather than as Bernoulli vectors (see \cref{thm:main}). 

Nevertheless, we believe that our reduction should work even with $r$ being a Bernoulli vector and $e$ being any worst-case error vector of bounded Hamming weight. Proving that it works seems quite challenging, however, and will likely require the development of new analytical tools for computational entropy.

\subsection{Future Directions}

Our work opens up several natural directions for further investigation.

\medskip\noindent\textbf{More standard worst-case foundations.}
The worst-case assumption that we need, the simultaneous hardness of
decoding a code and distinguishing syndromes of its dual from uniform, is non-standard and, to
our knowledge, has not been independently studied. An important next step is to
determine whether our approach can be instantiated from
more widely studied worst-case assumptions, such as the hardness of the Nearest
Codeword Problem (NCP) alone, or classical problems like the Minimum Distance
Problem (MDP). More broadly, one would like to understand whether the joint
hardness of a code and its dual is inherently needed to improve the smoothing approach to reductions, or whether there is some way around it.

\medskip\noindent\textbf{Worst-case noise.}
Our reduction produces a decoder~$S$ that works against \emph{random} Bernoulli
noise, whereas the smoothing-based reductions of Brakerski et
al.~\cite{EC:BLVW19} and Yu--Zhang~\cite{C:YuZha21} yield decoders for \emph{worst-case}
bounded-weight noise. Extending our approach to worst-case noise would
significantly strengthen the result, but appears to require new tools for
reasoning about computational entropy in the presence of adversarially chosen
error vectors. We believe this is a tractable and important challenge.

\subsection{Paper Outline}
In \cref{sec:prelims}, we recall the notation and technical tools used throughout the paper.
In \cref{sec:reduction}, we present our main reduction. 
In \cref{sec:corollaries}, we instantiate our reduction with suitable parameters to derive useful hardness regimes for LPN.








\section{Preliminaries}
\label{sec:prelims}



\paragraph{Notation} 
We use $x_i$ to denote the $i$-th bit of the binary string (vector) $x$ and $A[i]$ as the $i$-th row of the matrix $A$.
Denote $D_1 \equiv D_2$ if the distributions $D_1$ and $D_2$ are the same.
The notation $x \gets D$ means that $x$ is sampled from the distribution $D$; when $D$ is a set (we abuse notation), it means $x$ is sampled uniformly from the set.
Let $\Field{p} $ denote the order-$p$ prime field. 
$\mathrm{Ber}(\eta)$ is the Bernoulli distribution with parameter $\eta$. 
Write $[n] \coloneqq \{1,2,\dotsc, n\}$.
The notation $\binom{[n]}{s}$ denotes the set of all vectors in $\F_2^n $ with Hamming weight $s$.
Next, $\operatorname{wt}({v})$ denotes the Hamming weight of $v \in \F_2^n$, and $\supp(v) \coloneqq \{ i \in [n]\ |\ v_i \neq 0\}$ is the support of the vector.






\begin{restatable}{fact}{pilinguplemma}\emph{(Piling Up Lemma).}
    \label{fact:sum_bern}
    Let $X_1, X_2, \dots,X_k$ be independent random variables where $X_i \gets\Ber(\eta)$, then the random variable $S = X_1 + X_2 + \dots + X_k$, where addition is over $\Field{2}$, follows the distribution $\Ber\left(\dfrac{1-(1-2\eta)^k}{2}\right)$.
\end{restatable}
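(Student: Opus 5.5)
We prove the Piling Up Lemma by showing that the quantity $1-2\Pr[S=1]$ is multiplicative over independent summands. For a random variable $X$ over $\F_2$, define its \emph{bias} as $\Exp[(-1)^X] = 1 - 2\Pr[X=1]$. For $X_i \gets \Ber(\eta)$ the bias is $1-2\eta$.

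Since addition is over $\F_2$, we have $(-1)^{S} = \prod_{i=1}^k (-1)^{X_i}$. The $X_i$ are independent, so the expectation factors:
\[
\Exp\bigl[(-1)^S\bigr] = \prod_{i=1}^k \Exp\bigl[(-1)^{X_i}\bigr] = (1-2\eta)^k .
\]
The variable $S$ takes values in $\{0,1\}$, so it is Bernoulli with parameter $\Pr[S=1]$. Solving $1-2\Pr[S=1] = (1-2\eta)^k$ gives $\Pr[S=1] = \frac{1-(1-2\eta)^k}{2}$, which is the claim.

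An alternative is induction on $k$. Write $p_j$ for $\Pr[X_1+\dots+X_j = 1]$. Independence gives $p_{j+1} = p_j(1-\eta) + (1-p_j)\eta$, which rearranges to $1-2p_{j+1} = (1-2p_j)(1-2\eta)$. Unrolling from $1-2p_1 = 1-2\eta$ gives the same formula.

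The argument has no real obstacle. The only point requiring care is using independence to factor the expectation of the product, or equivalently to condition in the induction step.
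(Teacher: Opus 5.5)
Your proof is correct and matches the paper's argument: set $Y_i = (-1)^{X_i}$, use independence to factor the expectation of the product into $(1-2\eta)^k$, and solve $1 - 2\Pr[S=1] = (1-2\eta)^k$. Your inductive alternative via the recurrence $1-2p_{j+1} = (1-2p_j)(1-2\eta)$ is also valid.
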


\begin{proof}
    Deferred to \cref{proof:piling_up_lemma}
\end{proof}


We now define the set of vectors with fixed Hamming weight.
\begin{definition}[Fixed weight vector]
    Denote $\mathcal{W}(n,w)$ as the set of vector $v \in \Field{2}^n$ with $\wt(v) = w$.
\end{definition}
Next, we formally define the Learning Parity with Noise problem $\mathsf{(LPN)}$.
\begin{definition}[Learning Parity with Noise (LPN)]
    Let $n,m \in \mathbb{Z}$, and $\eta \in [0, 1]$. For sample count $m$, secret dimension $n$, and noise parameter $\eta$, algorithms $\Slpn$ and $\Dlpn$ are said to have advantage $\eps \in [0, 1]$ in solving the \emph{Search LPN} and \emph{Decision LPN} problems, respectively, if they satisfy the following conditions:
    \begin{itemize}[itemsep=0pt]
        \item \emph{Search LPN:} $\Slpn$ recovers the secret $s$ with probability at least $\eps$:
        \[
            \Pr_{\substack{A \gets \Field{2}^{m \times n} \\ s \gets \Field{2}^n \\ e \gets \mathrm{Ber}(\eta)^m}}[\Slpn(A, As + e) = s] \geq \eps
       . \]
        
        \item \emph{Decision LPN:} $\Dlpn$ distinguishes LPN samples from uniform with advantage at least $\eps$:
        \[
            \abs{\Pr_{\substack{A \gets \Field{2}^{m \times n} \\ s \gets \Field{2}^n \\ e \gets \mathrm{Ber}(\eta)^m}}[\Dlpn(A, As + e) = 1] - \Pr_{\substack{A \gets \Field{2}^{m \times n} \\ b \gets \Field{2}^m}}[\Dlpn(A, b) = 1]} \geq \eps.
        \]
    \end{itemize}
\end{definition}
We now define the Fixed Weight Syndrome Decoding problem $\mathsf{SD}$, which can be viewed as a slight variation of the dual form of $\mathsf{LPN}$.
\begin{definition}[Fixed Weight Random Syndrome Decoding (SD)]
    Let $n,m \in \mathbb{Z}$, and $\theta \in [0, 1]$. For code length $m$, code dimension $n$, and noise parameter $\theta$, algorithms $\Ssd$ and $\Dsd$ are said to have advantage $\eps \in [0, 1]$ in solving the \emph{Search SD} and \emph{Decision SD} problems, respectively, if they satisfy the following conditions:
    \begin{itemize}[itemsep=0pt]
        \item \emph{Search SD:} $\Ssd$ decodes a syndrome with probability at least $\eps$:
        \[
            \Pr_{\substack{A \gets \Field{2}^{m \times n}\\ r \gets \mathcal{W}(m, \theta \cdot m)}}[\Ssd(A, r^TA) = r] \geq \eps.
        \]
        
        \item \emph{Decision SD:} $\Dsd$ distinguishes a syndrome from uniform with advantage at least $\eps$:
        \[
            \abs{\Pr_{\substack{A \gets \Field{2}^{m \times n}\\ r \gets \mathcal{W}(m, \theta \cdot m)}}[\Dsd(A, r^TA) = 1] - \Pr_{\substack{A \gets \Field{2}^{m \times n} \\ c \gets \Field{2}^n}}[\Dsd(A, c^T) = 1]} \geq \eps.
        \]
    \end{itemize}
\end{definition}

\begin{remark}
    Our definition of Syndrome Decoding Problem deviates from the more natural dual to LPN in that the distribution of $r$ is a randomly sampled fixed weight vector rather than a Bernoulli vector.
\end{remark}

\section{Average-Case Reduction}
\label{sec:reduction}

Our main contribution is the following theorem, which shows that for every worst-case code generated by a matrix $A$, an $\LPN$ solver yields one of two consequences: either it can be used to solve a random decoding instance associated with $A$, or it can be used to distinguish fixed-weight random syndrome decoding instance with $A^T$ as the parity-check matrix from a random string.

\begin{theorem}
  \label{thm:main}
  Consider $n, m, t \in \mathbb{Z}$ and $\eta, \theta \in (0,1)$ such that $t^2\theta^2m < 1$. Suppose there exists an algorithm for Search LPN with sample count $t$, secret dimension $n$, and noise parameter $\tau =  (1 - (1-2\eta)^{\theta \cdot m})/2$, which runs in time $T$ and has advantage $\eps$. Then there exist algorithms $\algD$ and $\algS$ such that for every matrix $A \in \Field{2}^{m \times n}$, at least one of the following is true:
  \begin{itemize}[itemsep=0pt]
    \item $\algS$ runs in  time $\poly\left(n, m, T, (1/\eps)^{\log_m(m/\eps)}\right)$ and decodes codewords of $A$ from random noise such that, \[\Pr_{\substack{s \gets \Field{2}^n\\ e \gets Ber(\eta)^m}}[\algS(A, As+e) = s^\prime\  \mathrm{and}\  \wt(As^\prime - (As + e)) \leq 2\eta m] \geq \eps/2 - o(\eps).
    \]
    \item $\algD$ runs in time $\poly(n,m) + T$ and distinguishes a sparse combination of rows of $A$ from a uniform string such that,
    \[
    \Pr_{c \gets \Field{2}^n}[\algD(A, c^T) = 1] - \Pr_{r \gets {\cW(m,\theta \cdot m)}}[\algD(A, r^TA) = 1]  \geq \eps/2t.
    \]
  \end{itemize}
\end{theorem}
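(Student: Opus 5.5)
We build both algorithms from the Search LPN solver $\Slpn$. Given $(A,b)$ with $b=As+e$, $\algS$ samples $r_1,\dots,r_t\gets\cW(m,\theta m)$ independently. It sets $R$ to be the $t\times m$ matrix with rows $r_i^T$, forms the LPN instance $(RA,\,Rb)$, and runs $\Slpn$ on it. Note that $Rb=(RA)s+Re$. To make the secret uniform, $\algS$ first rerandomises by adding $A u$ to $b$ for a uniform $u$, and subtracts $u$ from the output. It then checks the candidate $s'$ by testing whether $\wt(As'-b)\le 2\eta m$. Success amplification is done by repetition, which is where the $(1/\eps)^{\log_m(m/\eps)}$ factor enters. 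The distinguisher $\algD$ is a hybrid distinguisher. On input $(A,c^T)$, it picks $i\gets[t]$ and builds a matrix $M$ whose first $i-1$ rows are uniform, whose $i$-th row is $c^T$, and whose remaining rows are $r_j^TA$ for fresh $r_j\gets\cW(m,\theta m)$. It picks $s\gets\F_2^n$ and forms labels: $M[j]s+\Ber(\tau)$ for the uniform rows, $c^Ts+\Ber(\tau)$ for row $i$, and $r_j^T(As+e)$ with a fresh $e\gets\Ber(\eta)^m$ for the rows $j>i$. It runs $\Slpn$ and outputs $1$ iff the result equals $s$.

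\textbf{Key step 1: the noise bits $r_j^Te$ are i.i.d.\ $\Ber(\tau)$ given the $r_j$'s.} By \cref{fact:sum_bern}, each $r_j^Te$ is $\Ber(\tau)$, because $\wt(r_j)=\theta m$. These bits are mutually independent whenever the supports of $r_1,\dots,r_t$ are pairwise disjoint. Two random weight-$\theta m$ subsets intersect with probability at most about $\theta^2 m$. A union bound over the $\binom t2$ pairs then shows that the supports are pairwise disjoint except with probability $O(t^2\theta^2m)$. This is why we need $t^2\theta^2 m<1$; I expect the parameter choice makes this quantity $o(\eps)$, which yields the $o(\eps)$ loss. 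On the disjointness event, the joint distribution of $(RA,Re)$ equals the distribution of $RA$ paired with an independent sample from $\Ber(\tau)^t$. This is exactly the obstacle described in the introduction, and we sidestep it because the error is random rather than adversarial.

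\textbf{Key step 2: the hybrid argument.} Let $H_k$ be the experiment in which the first $k$ rows of the matrix are uniform with $\Ber(\tau)$ noise, and the remaining rows are $r_j^TA$ with labels $r_j^T(As+e)$. Then $H_t$ is a genuine LPN instance, so $\Slpn$ succeeds in it with probability $\ge\eps$. The experiment $H_0$ is exactly what $\algS$ produces. Fix $A$. If $\algS$ fails the first bullet, then $\Pr[\text{success in }H_0]<\eps/2$, so the success probabilities of $H_t$ and $H_0$ differ by more than $\eps/2$.

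By telescoping, $\algD$'s advantage on a uniform $c$ versus $c=r^TA$ is at least $(\eps/2)/t$, once the $o(\eps)$ error from step 1 is absorbed. The sign is right, since uniform inputs correspond to $H_i$ and syndrome inputs to $H_{i-1}$. For the two experiments to match $\algD$'s simulation, the label of the syndrome row must have the correct distribution. With $c=r_i^TA$, $\algD$ uses $c^Ts+\Ber(\tau)$, whereas $H_{i-1}$ uses $r_i^T(As+e)$. These coincide by step 1, provided $r_i$ is disjoint from the later $r_j$'s and $\algD$'s own noise bits are independent. The running time of $\algD$ is $T+\poly(n,m)$.

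\textbf{Main obstacle.} The delicate part is making the disjointness conditioning compatible with the hybrids. On the non-disjoint event, which has probability $O(t^2\theta^2m)$, the noise bits $r_j^Te$ are correlated. I would handle this by bounding the total variation distance of every hybrid from its disjoint-conditioned version by this probability. A second delicate point is that $\algS$ outputs an $s'$ that need not equal $s$ but only satisfies $\wt(As'-b)\le 2\eta m$. For this I would use a Chernoff bound, which gives $\wt(e)\le 2\eta m$ with high probability, so the true $s$ passes the check.
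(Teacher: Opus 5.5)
There is a genuine gap in Key step 1. The theorem assumes only $t^2\theta^2 m<1$. That bounds the expected number of intersecting pairs of rows of $R$ by roughly $\binom{t}{2}\theta^2 m<t^2\theta^2 m/2<1/2$, which is a constant, not $o(\eps)$. With the parameters of \cref{cor:lpnmain} ($m=\lceil 2t^2n^{2\beta}\rceil$, $\theta m=n^\beta$), some collision occurs with probability roughly $1/5$, while $\eps$ may be $1/\poly(n)$. In your decomposition this loss lands in the hybrid. The label $c^Ts+\Ber(\tau)$ that $\algD$ plants differs from the label $r_i^T(As+e)$ in $H_{i-1}$ whenever $\supp(r_i)$ meets some later $\supp(r_j)$. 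Summed over $i$, this costs up to $\Theta(t^2\theta^2 m)$. Your bound on $\algD$'s advantage therefore becomes $(\eps/2-\Theta(t^2\theta^2 m))/t$, which is vacuous for small $\eps$. Splitting instead on success with independent noise only moves the same constant loss into $\algS$. So the TV-distance fix in your ``main obstacle'' paragraph does not work: collisions cannot be assumed away and must be handled.

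The paper does this in two steps. First, it splits on $Adv(A)$, the success probability of $\Slpn$ on $(RA,RAs+e)$ with \emph{independent} $e\gets\Ber(\tau)^t$. Then $\algD$'s hybrids change only the matrix rows, every label carries fresh $\Ber(\tau)$ noise, and the telescoping to $\eps/2t$ is exact with no disjointness needed. Second, $\algS$ handles collisions by brute force:
\begin{itemize}
\item It computes the set $C$ of rows whose support meets that of an earlier row.
\item It shows that $|C|<\max(8\log_m(m/\eps)\log(1/\eps),12)$ except with probability $o(\eps)$, because no column of $R$ has weight $\ge 4\log_m(m/\eps)$ and fewer than $\max(2\log(1/\eps),3)$ columns have weight $\ge 2$. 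It aborts otherwise.
\item It enumerates every $v\in\F_2^{|C|}$ as a guess for the values $R[j]As$, $j\in C$, and replaces those labels by $v_j+\Ber(\tau)$.
\end{itemize}
The remaining rows have pairwise disjoint supports. So for the correct guess, the input to $\Slpn$ is distributed exactly as $(RA,RAs+\Ber(\tau)^t)$, and the weight test filters out candidates from wrong guesses. This $2^{|C|}$ enumeration is the source of the $(1/\eps)^{\log_m(m/\eps)}$ factor in $\algS$'s running time. Repetition for amplification plays no role, and the theorem does not require it.
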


The rest of this section constitutes the proof of \cref{thm:main}. Fix any set of parameters $n$, $m$, $t$, $\eta$, $\theta$, and $\tau$ that satisfy the conditions in the theorem statement. Suppose there is an algorithm $\Slpn$ for Search LPN that has running time $T$ and advantage $\eps$ as specified in the theorem. The main idea in the reduction is to check how good $\Slpn$ is at decoding a code whose generator matrix is derived by randomly combining rows of a given matrix $A\in\F_2^{m\times n}$, and based on this, using it to either decode the code generated by $A$, or the code whose parity check matrix is $A$.

For any $A\in\F_2^{m\times n}$, let $Adv(A)$ be the advantage of $\Slpn$ at decoding a noisy codeword of the code generated by the matrix $R\cdot A \in \Field{2}^{t \times n}$, where $R$ is a $t \times m$ matrix and each row of $R$ is an independently sampled random vector of Hamming weight $\theta \cdot m$, written as $R \gets \cW(m, \theta \cdot m)^{t}$:
\begin{equation}
  \label{eq:advantage}
  Adv(A) = 
  \Pr_{\substack{s \gets \Field{2}^{n}\\ e \gets \mathrm{Ber}(\tau)^t\\ R \gets \cW(m, \theta \cdot m)^{t}}}[\Slpn(RA, RAs + e) = s]
\end{equation}
We can then split into two cases. If $Adv(A)$ is sufficiently large (i.e. $Adv(A) \geq \eps/2$), then $\Slpn$ can be used to construct an algorithm that decodes a noisy codeword $As+e$ of $A$ by running $\Slpn$ on $(RA,R(As+e))$. On the other hand, if $Adv(A)$ is sufficiently smaller than $\eps$ (i.e. $Adv(A) < \eps/2$), then this means $RA$ does not look random to $\Slpn$, and we can use a hybrid argument to distinguish the syndrome $r^TA$ from uniform when $r\gets\cW(m,\theta\cdot m)$. 

\medskip
We next state the lemmas capturing the properties of the two resulting algorithms. Note that these lemmas are stated under the assumption that the algorithm $\Slpn$ as described above exists. The algorithms themselves and the proofs of these lemmas are presented in \cref{sec:adv_large,sec:adv_small}. Together, these lemmas directly imply \cref{thm:main}. 

\begin{restatable}[Large Advantage]{lemma}{lemmaadvlarge}
  \label{lem:adv_large}
  There exists an algorithm $\algS$ that, for any $A\in\F_2^{m\times n}$ such that $Adv(A) \geq \eps/2$, decodes the code generated by $A$ from random Bernoulli noise of rate $\eta$, with success probability at least $\eps/2-o(\eps)$. More precisely,
  \[
  \Pr_{\substack{s \gets \Field{2}^n\\ e \gets Ber(\eta)^m}}[\algS(A, As+e) = s^\prime\  \mathrm{and}\  \wt(As^\prime - (As + e)) \leq 2\eta m] \geq \eps/2 - o(\eps).
  \]
  Further, $\algS$ runs in time $\poly\left(n, m, T, (1/\eps)^{\log_m(m/\eps)}\right)$.
\end{restatable}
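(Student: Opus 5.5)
The algorithm $\algS$ I would build is a ``re-randomize, run, and verify'' decoder. On input $(A, y)$ with $y = As+e$, it repeatedly samples $R \gets \cW(m,\theta\cdot m)^t$ and runs $\Slpn(RA, Ry)$ to get a candidate $s'$. It outputs the first candidate with $\wt(As' - y) \le 2\eta m$. Since $Ry = (RA)s + Re$, this matches the experiment defining $Adv(A)$ in \eqref{eq:advantage} exactly when the noise $Re \in \F_2^t$ is distributed as $\mathrm{Ber}(\tau)^t$ independently of $(s, R)$. By the Piling Up Lemma (\cref{fact:sum_bern}), each coordinate $r_i^Te$ is $\mathrm{Ber}(\tau)$, because $\wt(r_i) = \theta m$. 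The checking step is a routine Chernoff bound. For $\eta m = \omega(\log(1/\eps))$ the true $s$ passes except with probability $o(\eps)$, and any $s'$ that passes satisfies the success condition of the lemma by definition. So wrong candidates never hurt, and the whole analysis is about the probability that the true $s$ is produced.

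The crux is joint independence of the coordinates $r_i^Te$. A Fourier computation shows that $(r_1^Te,\dots,r_t^Te)$ is exactly $\mathrm{Ber}(\tau)^t$ only when the supports of the $r_i$ are pairwise disjoint. The reason is that the bias of $\sum_{i\in S} r_i^Te$ is $(1-2\eta)^{\wt(\sum_{i\in S} r_i)}$, and this must equal $(1-2\eta)^{|S|\theta m}$ for every $S$. The expected number of colliding coordinates is about $t^2(\theta m)^2/m = t^2\theta^2 m < 1$. That bound alone does not make the failure probability $o(\eps)$, so collisions have to be handled directly.

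My plan is to fix a threshold $k \approx \log_m(m/\eps)$ and show that $\Pr[\text{more than } k \text{ collision coordinates}] = o(\eps)$. I would do this by a union bound over sets of $k$ positions, using $\Pr[\text{a given set is hit by two rows}] \lesssim (t^2\theta^2/m)^{k}$-type estimates together with $t^2\theta^2 m<1$. If this exponent is too weak, I would tune $k$ to make the bound $o(\eps)$. When the collision set $J$ (with $|J| \le k$) is nonempty, the algorithm guesses $g \in \F_2^{J}$ for $e|_J$ and replaces $y_j$ by $y_j - g_j$ for $j\in J$. For each row $i$ and each $j \in J\cap\supp(r_i)$ it then injects a fresh independent $\mathrm{Ber}(\eta)$ bit into that row's copy. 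When the guess is correct, the modified right-hand side is $(RA)s + e'$, where each row of $e'$ sums $\theta m$ independent $\mathrm{Ber}(\eta)$ bits with no sharing across rows. Hence $e' \sim \mathrm{Ber}(\tau)^t$ independently of $R$, and $\Slpn$ sees precisely the distribution in \eqref{eq:advantage}.

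One point needs care. Conditioning on the event that $R$ has few collisions changes the distribution of $R$ by only $o(\eps)$ in statistical distance, so the conditional advantage stays at least $\eps/2 - o(\eps)$. Moreover $e|_{[m]\setminus J}$ remains fresh after the guess is fixed. Enumerating all $2^{|J|}$ guesses, or sampling them from $\mathrm{Ber}(\eta)^J$ and repeating, costs a factor of roughly $(1/\eps)^{O(\log_m(m/\eps))}$ in the number of calls. This is the source of the running time $\poly(n,m,T,(1/\eps)^{\log_m(m/\eps)})$. Verification filters out wrong guesses, so the total success probability is at least $\eps/2 - o(\eps)$. The main obstacle is making the collision tail bound strong enough that the failure probability is $o(\eps)$ while $k$ stays small enough for the stated running time.
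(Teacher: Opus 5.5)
Your route is valid and differs from the paper's in how collisions are repaired. The paper works with \emph{rows}. It takes the set $C\subseteq[t]$ of rows that meet an earlier row, discards their right-hand sides, enumerates all $2^{|C|}$ values of the noiseless bits $R_tAs$, and refills those rows with the guess plus fresh $\Ber(\tau)$ noise. The remaining rows have pairwise disjoint supports. To bound $|C|$ the paper must control both the maximum column weight and the number of columns of weight at least $2$, which is where its $(1/\eps)^{O(\log_m(m/\eps))}$ factor comes from.

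You work with \emph{coordinates}. You guess $e|_J$ on the set $J\subseteq[m]$ of columns of weight at least $2$, cancel it, and re-inject one fresh $\Ber(\eta)$ bit per (row, coordinate) incidence. This keeps all $t$ rows and makes the sets $\supp(r_i)\setminus J$ pairwise disjoint. In the correct iteration it yields exactly $\Ber(\tau)^t$ noise, independent of $(R,s)$. You therefore only need a bound on the number of heavy columns. Once the threshold is set correctly (below), the overhead is $2^{|J|}=\poly(1/\eps)$, which is better than the paper's. Your explicit condition $\eta m=\omega(\log(1/\eps))$ for the verification step is also needed by the paper, where it is left implicit.

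The step that fails as written is your tail bound. The probability that a fixed coordinate is covered by two rows is about $\binom{t}{2}\theta^2\le t^2\theta^2/2$, not $t^2\theta^2/m$. For a fixed set $S$ of $k$ coordinates, take a union over a choice of covering pair for each coordinate, using that a row contains a fixed $\ell$-set with probability at most $\theta^\ell$. This gives $(t^2\theta^2/2)^k$ for the event that all of $S$ is covered twice; the paper uses negative correlation here instead. Hence
\[
\Pr[|J|\ge k]\;\le\;\binom{m}{k}\Big(\frac{t^2\theta^2}{2}\Big)^k\;\le\;\Big(\frac{e\,m t^2\theta^2}{2k}\Big)^k\;<\;\Big(\frac{e}{2k}\Big)^k .
\]
This is essentially tight, because $|J|$ behaves like a Poisson variable with constant mean (about $1/4$ in the regime of \cref{cor:lpnmain}). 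Your threshold $k\approx\log_m(m/\eps)$ is $O(1)$ when $\eps=1/\poly(n)$ and $m=\poly(n)$, so it leaves a constant abort probability rather than $o(\eps)$.

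The fix is $k=\lceil 2\log(1/\eps)\rceil$, the same bound the paper proves on heavy columns. For $\eps\le 1/2$ this gives $\Pr[|J|\ge k]\le (e/4)^{k}\le \eps^{1.1}=o(\eps)$. Your worry that $k$ then becomes too large for the running time does not materialise: the enumeration costs $2^{|J|}<2^{k}\le 2/\eps^{2}$. Since $\log_m(m/\eps)\ge 1$, this is within $\poly\big((1/\eps)^{\log_m(m/\eps)}\big)$.
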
 

\begin{restatable}[Small Advantage]{lemma}{lemmaadvsmall}
  \label{lem:adv_small}
  There exists an algorithm $\algD$ that, for any $A\in\F_2^{m\times n}$ such that $Adv(A) < \eps/2$, distinguishes between random syndromes of the code with parity check matrix $A$ and the uniform distribution with advantage at least $\eps/2t$. More precisely,
  \[
  \Pr_{c \gets \Field{2}^n}[\algD(A, c^T) = 1] - \Pr_{\substack{r \gets \cW(m, \theta \cdot m)}}[\algD(A, r^TA) = 1] \geq \eps/2t.
  \]
  Further, $\algD$ runs in time $T+\poly(n,m)$.
\end{restatable}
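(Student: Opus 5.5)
The plan is a standard hybrid argument over the $t$ rows of the LPN matrix. The key observation is that in the definition of $Adv(A)$ in \eqref{eq:advantage}, the noise vector $e \gets \Ber(\tau)^t$ is sampled \emph{independently} of $R$. So the only difference between the experiment defining $Adv(A)$ and the genuine Search LPN experiment is the distribution of the $t\times n$ matrix: rows $r_j^TA$ with $r_j \gets \cW(m,\theta\cdot m)$, versus uniform rows. No leakage of $R$ through the noise needs to be handled here.

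\textbf{Hybrids.} For $i \in \{0,1,\dots,t\}$, let $M_i \in \F_2^{t\times n}$ be the matrix whose first $i$ rows are $r_1^TA,\dots,r_i^TA$ with independent $r_j \gets \cW(m,\theta\cdot m)$, and whose remaining $t-i$ rows are independent uniform vectors in $\F_2^n$. Define
\[
p_i = \Pr_{M_i,\ s\gets\F_2^n,\ e\gets\Ber(\tau)^t}[\Slpn(M_i, M_is+e) = s].
\]
Then $M_0$ is uniform, so $p_0 \ge \eps$ by the assumed advantage of $\Slpn$. Also $M_t \equiv RA$ with $R\gets\cW(m,\theta\cdot m)^t$, so $p_t = Adv(A) < \eps/2$. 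Hence $p_0 - p_t > \eps/2$.

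\textbf{The distinguisher.} On input $(A, y)$ with $y\in\F_2^n$, $\algD$ does the following:
\begin{enumerate}[itemsep=0pt]
  \item Sample $i\gets[t]$.
  \item Build a matrix $M$ whose first $i-1$ rows are $r_j^TA$ for fresh $r_j\gets\cW(m,\theta\cdot m)$, whose $i$-th row is $y$, and whose last $t-i$ rows are uniform.
  \item Sample $s\gets\F_2^n$ and $e\gets\Ber(\tau)^t$, and run $\Slpn(M, Ms+e)$.
  \item Output $1$ if and only if $\Slpn$ returns $s$. This check is possible because $\algD$ chose $s$ itself.
\end{enumerate}
If $y = c^T$ is uniform, then conditioned on $i$ the matrix $M$ is distributed as $M_{i-1}$. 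If $y = r^TA$ with $r\gets\cW(m,\theta\cdot m)$, then $M$ is distributed as $M_i$. Therefore
\[
\Pr_c[\algD(A,c^T)=1] - \Pr_r[\algD(A,r^TA)=1] = \frac1t\sum_{i=1}^t (p_{i-1}-p_i) = \frac{p_0-p_t}{t} > \frac{\eps}{2t},
\]
which has the sign required by the lemma.

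\textbf{Running time.} Sampling a uniform weight-$\theta m$ vector, computing the $t \le \poly$ products $r_j^TA$, and sampling the remaining randomness all take $\poly(n,m)$ time; note that $t^2\theta^2 m<1$ keeps $t$ polynomial in the relevant range. Beyond this, $\algD$ makes a single call to $\Slpn$, so its total time is $T+\poly(n,m)$.

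I do not expect a real obstacle in this lemma. The only point needing care is confirming that the hybrid endpoints match exactly, namely that $Adv(A)$ uses noise independent of $R$, so that $p_t = Adv(A)$ holds on the nose. The genuine difficulty of the reduction, controlling the correlation between $RA$ and $Re$, lives entirely in \cref{lem:adv_large}.
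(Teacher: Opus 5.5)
Your proposal is correct and matches the paper's proof: the same hybrids interpolating from uniform rows to rows $r^TA$, the same distinguisher that plants its input as the boundary row and checks whether $\Slpn$ recovers its self-chosen secret, and the same telescoping sum giving $(p_0-p_t)/t > \eps/2t$ (you index $i\in[t]$ where the paper uses $i\in[0,t-1]$, which makes no difference). Your remark that $t^2\theta^2 m<1$ together with $\theta m\ge 1$ forces $t<\sqrt{m}$, which justifies the $\poly(n,m)$ overhead, is a detail the paper leaves implicit.
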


\subsection{Large Advantage}
\label{sec:adv_large}

In this section, we prove \cref{lem:adv_large}, about the ability of an algorithm $\algS$ to decode the code generated by $A$ when $Adv(A) \geq \eps/2$.

\lemmaadvlarge*


The algorithm $\algS$ that proves the lemma is described below, followed by the proof of the lemma. Here, $\Slpn$ is the assumed search algorithm for LPN that has running time $T$ and advantange $\eps$.

\begin{algorithm}[h!]
  \caption{Algorithm $\algS(A \in \Field{2}^{m \times n}, b \in \Field{2}^{m})$} \label{alg:alg1}
  \begin{spacing}{1.2}
    \begin{algorithmic}[1]
      \State Sample matrix $R \gets \cW(m, \theta \cdot m)^{t}$. \Comment{Smoothing matrix}
      \State Set $\mathrm{threshold} = \max(8\log_m(m/\eps)\log(1/\eps), 12)$. \Comment{Threshold}
      \State Set $C = \{ j \in [t] \mid \exists i < j: \supp(R[i]) \cap \supp(R[j]) \neq \emptyset \}$. 
      \Comment{Collision Set}
      \State Convert C into a list with arbitrary ordering.
      \If{$|C| \geq \mathrm{threshold}$}
      \State \Return $\perp$.\linelabel{step:abort}\Comment{Abort}
      \EndIf
      
      \For{each $v \in \Field{2}^{|C|}$}
      \State Set $b^\prime \gets Rb$.
      \For{each $j$ in $C$} \Comment{Iterate through collided rows}
      \State Replace $b^\prime_{j} \gets (v_j + \Ber(\tau))$  \linelabel{algo:replace} \Comment{Recall that $\tau =  (1 - (1-2\eta)^{\theta \cdot m})/2$}
      \EndFor
      \State Query $s^\prime \gets \Slpn(RA, b^\prime)$. \linelabel{algo:query} \Comment{Query oracle}
      \If{$\wt(b - As^\prime) \leq 2\eta m$}    \linelabel{algo:verify}\Comment{Verify answer}
      \State \Return $s^\prime$. 
      \EndIf
      \EndFor
      \State \Return $\perp$
    \end{algorithmic}
  \end{spacing}
\end{algorithm}

\begin{proofof}{\cref{lem:adv_large}}
  Fix an $A$ such that $Adv(A) \geq \eps/2$. Recall that this means that $\Slpn$ has $\eps/2$ probability of successfully recovering $s$ when given a sample from the distribution $(RA, RAs + \Ber(\tau)^t)$, with $\tau =  (1 - (1-2\eta)^{\theta \cdot m})/2$. To show the correctness of the algorithm, it suffices to prove that in at least one iteration, the input $(RA, b^\prime)$ provided to $\Slpn$ in \cref{algo:query} has the same distribution as $(RA, RAs + \Ber(\tau)^t)$, and also that the algorithm does not prematurely terminate at \cref{step:abort}.\\\\
  In the following claim, we show that the algorithm does not abort in \cref{step:abort} with high probability. 
  \begin{claim}
    The probability that $|C| \geq \mathrm{threshold} = \max(8 \log_m(m/\eps)\log(1/\eps), 12)$ is at most $o(\eps)$. Formally,
    \[\Pr\left[|C| \geq \max(8 \log_m(m/\eps)\log(1/\eps), 12)\right]
    \leq o(\eps).\]
  \end{claim}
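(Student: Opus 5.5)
We bound $|C|$ by the number of colliding pairs among the $t$ rows of $R$, and then apply a first-moment (union) bound over collections of $k$ distinct pairs. For $i<j$, let $X_{ij}$ indicate that $\supp(R[i])\cap\supp(R[j])\neq\emptyset$. Every $j\in C$ has at least one $i<j$ with $X_{ij}=1$. So if $|C|\ge k$, we can choose, for each of $k$ distinct indices $j\in C$, one witness pair $(i_j,j)$, and these $k$ pairs are distinct. The event $|C|\ge k$ is therefore contained in the event that there exist $k$ pairs, with distinct larger endpoints $j_1<\dots<j_k$, all of which collide.

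The single-pair bound comes first. Fix $R[i]$; a uniformly random $R[j]$ of weight $w=\theta m$ misses $\supp(R[i])$ with probability $\binom{m-w}{w}/\binom{m}{w}$. Hence
\[
\Pr[X_{ij}=1]\le w\cdot\frac{w}{m}=\theta^2 m=:p,
\]
by a union bound over the $w$ coordinates of $R[j]$, each of which lands in a fixed $w$-set with probability $w/m$.

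Next we need the $k$ events to be simultaneously bounded by $p^k$. Order the chosen pairs by their larger endpoint $j_1<\dots<j_k$. Condition on all rows $R[1],\dots,R[j_\ell-1]$. Then $R[j_\ell]$ is still fresh and uniform, so the conditional probability that it meets the fixed support of $R[i_\ell]$ is at most $p$. Chaining these conditionals gives probability at most $p^k$ for the joint event. The number of ways to choose $j_1<\dots<j_k$ together with a witness $i_\ell<j_\ell$ for each is at most $\binom{t}{k}t^k\le t^{2k}/k!$. Therefore
\[
\Pr[|C|\ge k]\le \frac{(t^2\theta^2 m)^k}{k!}\le \frac{1}{k!}.
\]
Here the last step uses the hypothesis $t^2\theta^2 m<1$ of \cref{thm:main}.

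The main obstacle is that $1/k!$ alone need not be $o(\eps)$ at $k=\mathrm{threshold}$. With $k=8\log_m(m/\eps)\log(1/\eps)$ we get $k!\ge (k/e)^k$, which beats $1/\eps$ as soon as $k\log(k/e)\gg\log(1/\eps)$. Since $k\ge 8\log(1/\eps)$, it suffices that $\log(k/e)\ge 1$ comfortably, which holds once $k\ge 12$. This gives $1/k!\le \eps^{8}\cdot$(const), which is $o(\eps)$ when $\eps\to 0$. The $\max(\cdot,12)$ handles the regime of large $\eps$, where $12!$ already makes the bound tiny. If this slack turns out to be insufficient, I would instead use the sharper bound $(t^2\theta^2m)^k/k!$, presumably where the extra $\log_m(m/\eps)$ factor enters. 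The intended parameter regime makes $t^2\theta^2 m$ polynomially small in $m$, so $(t^2\theta^2 m)^k\le m^{-\Omega(k)}$, and $k\ge \log_m(1/\eps)\cdot\omega(1)$ then gives $o(\eps)$ directly. I expect to check which of the two accounts the authors use, but either way the structure is a union bound over witness pairs followed by sequential conditioning.
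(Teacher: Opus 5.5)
\textbf{Verdict: correct, via a different route from the paper's.}

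\textbf{The paper's route.} The paper works column by column. The entries of each column $P_i$ of $R$ are independent $\Ber(\theta)$, so a union bound over the $m$ columns, using $t\theta<1/\sqrt{m}$, shows that no column has weight $\ge k=4\log_m(m/\eps)$. It then asserts, without proof, a negative-correlation property of the column weights. From this it concludes that at most $h=\max(2\log(1/\eps),3)$ columns have weight $\ge 2$. Since every $j\in C$ lies in such a column, $|C|<kh$, and this product is where the threshold $8\log_m(m/\eps)\log(1/\eps)$ comes from.

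\textbf{Your route.} You bound $|C|$ directly by a first-moment count over witness pairs with distinct larger endpoints. The chaining is sound: each earlier event depends only on $R[1],\dots,R[j_\ell-1]$, while $R[j_\ell]$ is fresh and uniform.

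\textbf{What your route buys.} First, the proof is self-contained and avoids the unproved negative-association step. Second, it gives a cleaner and stronger tail bound, $\Pr[|C|\ge k]\le (t^2\theta^2m)^k/k!<1/k!$. At the stated threshold this is at most $\eps^8$. When $\eps$ is a constant, neither argument literally yields $o(\eps)$, but your constant $1/12!$ is far smaller than the paper's $(e/6)^3$. Third, it shows the $\log_m(m/\eps)$ factor in the threshold is unnecessary. Already $k=\max(2\log_2(1/\eps),6)$ gives $\Pr[|C|\ge k]\le (e/6)^k<2^{-k}\le\eps^2$. That would cut the enumeration cost in $\algS$ from $(1/\eps)^{O(\log_m(m/\eps))}$ to $\poly(1/\eps)$. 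The paper's route adds only structural information about column weights, which is not used anywhere else.

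\textbf{Drop the fallback paragraph.} Its premise is false. The theorem assumes only $t^2\theta^2m<1$. In \cref{cor:lpnmain}, the choice $m=\lceil 2t^2n^{2\beta}\rceil$ and $\theta=n^\beta/m$ gives $t^2\theta^2m\approx 1/2$, a constant rather than something polynomially small in $m$. Also, the $\log_m(m/\eps)$ factor in the paper's threshold comes from the maximum-column-weight bound, not from a small pair-collision probability. Your main bound never needs the fallback, so nothing is lost.
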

  
  \begin{proof}
    Using union bound, we first upperbound the probability that a column in $R \gets \cW(m, \theta \cdot m)^t$ has at least $k$ bits that are 1. Let $P_{i} \in \Field{2}^{t}$ be the $i$-th column vector of $R$. For any $i \in [m]$,
    \begin{align}                
      \pr{\wt(P_i) \geq k } \leq {t \choose k }(\dfrac{\theta m}{m})^k \leq (\dfrac{et\theta}{k})^k.
    \end{align}
    Since $t^2\theta^2m < 1$, which implies $t\theta < 1/\sqrt{m}$. We can conveniently set $k = 4\log_m (m/\eps)$ and do a union bound to say that with high probability $1 - o(\eps)$, no columns will have more than $k$ non-zero bits. Formally, 
    \begin{align}
      \label{eq:claim_high:column_max}
      \pr{\exists i \in [m], \wt(P_i) \geq 4\log_m (m/\eps) } \leq m (\dfrac{e}{\sqrt{m}})^{4\log_m (m/\eps)} = o (\eps).
    \end{align}
    Setting $k = O(\log_m(m/\eps))$ is essentially having $k$ being a constant when $\eps = 1/\poly(n)$.
    Next, we show that with high probability $1 - o(\eps)$, there are at most $h = 2\log(1/\eps)$ columns that have weight at least $2$. This is done again by using union bound over all possible $h$ subsets of columns,
    \begin{align}
      \label{eq:claim_high:union_column}
      \Pr \left[ \exists S \in \binom{[m]}{h} , \forall j \in S, \, \wt(P_j) \ge 2 \right] \leq {m \choose h} \Pr \left[\forall j \in [h], \, \wt(P_j) \ge 2 \right].
    \end{align}
    Since each row is a fixed weight vector, if a column happens to have a weight $\geq 2$, conditioned on this event, the probability of the next column having weight $\geq 2$ is smaller. This negative correlation property implies,
    \begin{align}
      \label{eq:claim_high:single_column}
      \Pr \left[\forall j \in [h], \, \wt(P_j) \ge 2 \right] \leq \Pr \left[\wt(P_1) \ge 2 \right]^h \leq (\dfrac{t^2\theta^2}{2})^{h}.
    \end{align}
    Therefore, combining \cref{eq:claim_high:union_column} and \cref{eq:claim_high:single_column}, gives us an upperbound
    \begin{align}
      \label{eq:claim_high:column_set}
      \Pr[\text{There exists at least } h \text{ columns with weight} \geq 2 ] \leq {m \choose h}(\dfrac{t^2\theta^2}{2})^{h} \leq (\dfrac{emt^2\theta^2}{2h})^h.
    \end{align}
    Since $mt^2\theta^2 < 1$. If $\eps \leq 1/2$, we can plug in $h = 2\log(1/\eps)$ to get 
    \[
        (\dfrac{emt^2\theta^2}{2h})^h \leq  (\dfrac{e}{4})^{2\log(1/\eps)} = o(\eps)
    \]
    Otherwise, if $\eps > 1/2$, we can plug in $h = 3$ to get
    \[
        (\dfrac{emt^2\theta^2}{2h})^h \leq  (\dfrac{e}{6})^3 = o(\eps)
    \]
    Therefore, if $h = \max(2\log(1/\eps), 3)$, the probability in \cref{eq:claim_high:column_set} is upper bounded by $o(\eps)$. 
    \\\\
    In summary, probability bounds we have on the two bad events are,
    \begin{itemize}[itemsep=0pt]
      \item  The probability that there exists a column in $R$ with more than $k=4 \log_m(m/\eps)$ is at most $o(\eps)$ (\cref{eq:claim_high:column_max}).
      \item  The probability that there exists
      $h = \max(2\log(1/\eps), 3)$ columns in $R$ with weight $\geq 2$ is at most $o(\eps)$ (\cref{eq:claim_high:column_set}).
    \end{itemize}
    By union bound, with probability $1 - o(\eps)$, neither of the bad events happens. In that scenario, the maximum size of $|C|$ is upper-bounded as follows,
    \[ |C| < k \cdot h = \max(8\log_m(m/\eps)\log(1/\eps), 12 \log_{m}(m/\eps)).\]
    Because the second value in the maximum expression happens only when $\eps > 1/2$, that also means $\log_{m}(m/\eps) \leq 1$. Therefore, the statement can be further simplified
    \[
        |C| < k \cdot h = \max(8\log_m(m/\eps)\log(1/\eps), 12)
    \]
    Hence, the probability of the abort in \cref{step:abort}, i.e $|C| \geq \max(8\log_m(m/\eps)\log(1/\eps), 12)$, is at most $o(\eps)$.
  \end{proof}
Now, assume that abort does not occur in \cref{step:abort}, we will show that in that case at least one iteration, the input $(RA, b^\prime)$ provided to $\Slpn$ in \cref{algo:query} has the same distribution as $(RA, RAs + \Ber(\tau)^t)$.

By definition of $C  = \{ j \in [t] \mid \exists i < j: \supp(R[i]) \cap \supp(R[j]) \neq \emptyset \}$, we get that for all distinct $p,q \in [t]\setminus C$,
\[
    \supp(R[p])\cap \supp(R[q]) = \emptyset.
\]
We can therefore split the smoothing matrix $R$ into two parts: `touched' submatrix $R_t \in \Field{2}^{|C|\times m}$, consisting of the rows indexed by the elements of $C$ and `untouched' submatrix $R_{ut} \in \Field{2}^{(t-|C|)\times m}$, consisting of the rows not indexed by the elements of $C$. Formally, viewing both $C$ and $[t]\setminus C$ as lists with arbitrary but fixed orderings, we define
\[
    R_t[i] := R[C[i]]
    \quad\text{for every } i \in [|C|],
\]
and
\[
    R_{ut}[i] := R[([t]\setminus C)[i]]
    \quad\text{for every } i \in [t-|C|].
\]
  
  To simplify the argument, ignore the ordering of the rows in $R_{ut}$ and $R_{t}$ as we can always combine them back to $R$ and maintain the original ordering.
  
   \medskip
  For the rows represented by $R_{ut}$, because the rows' supports are disjoint, each bit of $R_{ut} \cdot e$, where $e \gets \Ber(\eta)^{m}$, is independently and identically distributed. The distribution of $(R_{ut} e)_i$ is exactly the sum of $\theta \cdot m$ independent and identically distributed $\Ber(\eta)$ random variables. By \cref{fact:sum_bern}, this is $(R_{ut} e)_i \equiv \Ber(\tau)$ where $\tau = (1 - (1-2\eta)^{\theta \cdot m})/2$ and implies that  $(R_{ut}, R_{ut} e) \equiv (R_{ut}, \Ber(\tau)^{t - |C|})$. Then 
  \[
  (R_{ut}A, R_{ut}b) = (R_{ut}A, R_{ut}As + R_{ut}e) \equiv (R_{ut}A, R_{ut}As + \Ber(\tau)^{t - |C|}).
  \]

  For the rows represented by $R_{t}$, since we are enumerating all possible $\Field{2}^{|C|}$ assignments. There will be one assignment that is exactly the value of $v = R_{t}As$, and in this case the replacement procedure in \cref{algo:replace} sets the corresponding part of $b'$ to exactly $R_{t}As + \Ber(\tau)^{|C|}$.

  Finally, we combine $R_{t}$ and $R_{ut}$ back to $R$ respecting the original ordering. Assuming that we are in the iteration of getting a correct assignment, $v = R_{t}As$, the bits in $R_tb = R_tAs + R_te$ are replaced with $R_tAs + \Ber(\tau)^{|C|}$. This does not leak any information about $R_{ut} e$, and the query to $\Slpn$ is exactly the distribution $(RA, RAs + \Ber(\tau)^m)$. As $\Slpn$ has $\eps/2$ probability of success on this distribution and the probability of abort in \cref{step:abort} is at most $o(\eps)$, with probability at least $\eps/2-o(\eps)$, this execution $\Slpn(RA,b')$ will return $s' = s$. For the verification of the secret $s$ in \cref{algo:verify}, having the correct candidate secret $s^\prime = s$, the probability that error vector $b - As^\prime$ has Hamming weight greater than $2\eta m$ is exponentially small, and so the verification will fail with negligible probability. So the algorithm $\algS$ succeeds with probability at least $\eps/2-o(\eps)$.

  The remaining possibility is that the algorithm $\Slpn(RA,b')$ returns a candidate $s^\prime$ that is not equal to $s$ in some other iteration (i.e $v \neq R_{t}As$). This is still fine because $s^\prime$ must have the property that $\wt(b - As^\prime) \leq 2\eta m$ to pass the verification. Therefore, $S$  recovers a secret candidate $s^\prime$ from noisy codeword samples of $A$ with advantage at least $\eps/2 - o(\eps)$.

  The runtime of the algorithm is determined by the size of the matrix $A \in \Field{2}^{m \times n}$ and the number of queries called to $\Slpn$. Since $|C|$ is upper bounded by $\max(8\log_m(m/\eps)\log(1/\eps), 12)$, the number of iterations is $O(2^{\log_m(m/\eps)\log(1/\eps)}) = O((1/\eps)^{\log_m(m/\eps)})$ .Therefore, the runtime of $\algS$ is $\poly\left(n, m, T, (1/\eps)^{\log_m(m/\eps)}\right)$ 
\end{proofof}

\subsection{Small Advantage}
\label{sec:adv_small}
In this section, we prove \cref{lem:adv_small}, about the ability of an algorithm $\algD$ to distinguish random syndromes from uniform when $Adv(A) < \eps/2$ .

\lemmaadvsmall*

The algorithm $\algD$ that proves the lemma is described below, followed by the proof of the lemma. Here, $\Slpn$ is the assumed search algorithm for LPN that has running time $T$ and advantage $\eps$.

\begin{algorithm}[h!]
  \caption{Algorithm $\algD(A \in \Field{2}^{m \times n}, c^T \in \Field{2}^{1 \times n})$}
  \begin{spacing}{1.2}
    \begin{algorithmic}[1]
      \State Sample $i \gets [0, t - 1]$.                            \Comment{Select which hybrid to use}
      \State Sample matrices $R \gets \cW(m, \theta \cdot m)^{i}$, $B \gets \Field{2}^{(t - 1 - i) \times n}$
      \State Set $H \gets \begin{pmatrix}
        RA\\
        c^T\\
        B
      \end{pmatrix} \in \Field{2}^{t \times n}$                        \Comment{Create a sample from the hybrid distribution}
      \State Sample $s \gets \Field{2}^n$ and $e \gets \Ber(\tau)^t$  \Comment{Recall that $\tau =  (1 - (1-2\eta)^{\theta \cdot m})/2$}
      \If{$\Slpn(H, H s + e)$ returns $s$}                        \Comment{Check whether $\Slpn$ succeeds}
      \State \Return $1$
      \Else\ \Return $0$
      \EndIf
    \end{algorithmic}
  \end{spacing}
\end{algorithm}

\begin{proofof}{\cref{lem:adv_small}}
Fix an $A$ such that $Adv(A) < \eps/2$. We begin the proof by defining a sequence of hybrid distributions. For any $i \in [0, t]$, the hybrid distribution $\mathcal{H}_i$ over $\Field{2}^{t \times n}$ is defined by the following sampling process:
\begin{enumerate}[itemsep=0pt]
      \item Sample a matrix $R \gets \cW(m, \theta \cdot m)^{i}$.
      \item Sample a matrix $B \gets \Field{2}^{(t - i) \times n}$
    \item Output the row-concatenated matrix $\begin{pmatrix}
      RA\\
      B
    \end{pmatrix} \in \Field{2}^{t \times n}$.
\end{enumerate}

We now use the fact that, by our assumption in this case, the $\Slpn$ solver succeeds with low probability on the hybrid corresponding to $i=t$, whereas it succeeds with high probability on the hybrid corresponding to $i=0$. To be concrete, first
  fix any $i \in [0, t - 1]$, sample $R \gets \cW(m, \theta \cdot m)^{i}$, $B \gets \Field{2}^{(t - i - 1) \times n}$, we can make the following conclusion on the distribution $\cH$ of $H = \begin{pmatrix}
    RA\\
    c^T\\
    B
  \end{pmatrix}$.
  \begin{itemize}[itemsep=0pt]
    \item If $c \gets \Field{2}^m$, $\cH$ is exactly $\mathcal{H}_{i}$.
    \item If $c = r^TA$ where $r \gets \cW(m, \theta \cdot m)$, $\cH$ is exactly $\mathcal{H}_{i+1}$.
  \end{itemize}
  With $s$ and $e$ sampled as $s \gets \Field{2}^n$ and $e \gets \Ber(\tau)^t$, define $Adv_0, Adv_1$ as
  \[
  Adv_0 = \Pr_{\substack{ i\gets[0,t-1]\\ H\gets\cH_i}}[\Slpn(H, H s + e) = s],\quad Adv_1 = \Pr_{\substack{ i\gets[0,t-1]\\ H\gets\cH_{i+1}}}[\Slpn(H, Hs + e) = s].
  \]
  Then, the behavior of the algorithm $\algD$ we defined can be described exactly with $Adv_0, Adv_1$
  \[
  Adv_0 = \Pr_{\substack{c \gets \Field{2}^n}}[\algD(A, c^T) = 1],\quad  Adv_1 = \Pr_{r \gets \cW(m, \theta \cdot m)}[\algD(A, r^TA) = 1].
  \]
  By a standard hybrid argument, $Adv_0$ and $Adv_1$ have a gap of $Adv_0 - Adv_1 \geq \eps/2t$. This is stated in the following claim, whose proof we defer to \cref{proof:hybrid_argument}.
  \begin{restatable}{claim}{hybridargument}\emph{(Hybrid Argument).}  \label{claim:hybrid_argument}
    $
    Adv_0 - Adv_1 \geq \eps/2t.
    $
  \end{restatable}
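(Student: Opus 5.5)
The approach is a telescoping sum over the hybrids, averaged over the uniformly chosen index $i$. For each $j\in[0,t]$ let
\[
p_j = \Pr_{H\gets\cH_j,\ s\gets\Field{2}^n,\ e\gets\Ber(\tau)^t}[\Slpn(H,Hs+e)=s].
\]
First I identify the two endpoints.

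For $j=0$, $H$ is a uniformly random matrix in $\Field{2}^{t\times n}$. So $(H,Hs+e)$ is exactly a Search LPN instance with $t$ samples, dimension $n$ and noise $\tau$, and by assumption $p_0\ge\eps$.

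For $j=t$, $H=RA$ with $R\gets\cW(m,\theta\cdot m)^t$. Comparing with \cref{eq:advantage}, this gives $p_t=Adv(A)<\eps/2$. Hence $p_0-p_t>\eps/2$.

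Next I express $Adv_0$ and $Adv_1$ in terms of the $p_j$. Since $i$ is uniform on $[0,t-1]$ and, conditioned on $i$, the matrix $H$ is distributed as $\cH_i$ (respectively $\cH_{i+1}$), we have
\[
Adv_0=\frac1t\sum_{i=0}^{t-1}p_i \quad\text{and}\quad Adv_1=\frac1t\sum_{i=0}^{t-1}p_{i+1}.
\]
This uses the case analysis already stated before the claim: when $c$ is uniform, $H$ is exactly $\cH_i$; when $c=r^TA$ with $r\gets\cW(m,\theta\cdot m)$ independent of everything else, $H$ is exactly $\cH_{i+1}$.

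Subtracting, the sum telescopes:
\[
Adv_0-Adv_1=\frac1t(p_0-p_t)>\frac{\eps}{2t}.
\]

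The only point needing care is the exact distributional identification of the hybrids, in particular the $c=r^TA$ case. There the $i$ rows of $RA$, followed by $r^TA$, form $i+1$ independent rows of the form $r'^TA$ with $r'\gets\cW(m,\theta\cdot m)$. The row ordering matches $\cH_{i+1}$, because the sparse rows come first and the $t-1-i$ uniform rows of $B$ come last. Beyond this, the argument is routine.
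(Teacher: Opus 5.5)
Your proof is correct and takes essentially the same route as the paper: you identify the endpoint hybrids $\cH_0$ (a genuine LPN instance, success probability at least $\eps$) and $\cH_t$ (success probability $Adv(A) < \eps/2$), write $Adv_0$ and $Adv_1$ as uniform averages of the hybrid success probabilities, and telescope. Your explicit check that the row ordering in $H$ matches $\cH_i$ and $\cH_{i+1}$ is a useful addition, since the paper only asserts this.
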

  Therefore, algorithm $\algD$ can distinguish random syndromes of $A$ from uniform with advantage at least $\eps/2t$. As $\algD$ only calls $\Slpn$ once, its running time is $\poly(n,m) + T$.
\end{proofof}

\section{Corollaries}
\label{sec:corollaries}

In this section, we discuss some implications of \cref{thm:main}. We show how the theorem gives meaningful hardness for LPN with parameters that are within the statistical-computational gap. In particular, we will show that \cref{thm:main} gives partial worst-case hardness for some parameters for LPN that can be used to construct public-key cryptography. The next corollary describes the contrapositive statement of \cref{thm:main} with parameters simplified and rearranged.

\begin{corollary}[Hardness for LPN]
    \label{cor:lpnmain}
    Consider any polynomial $t(n) \in \mathbb{Z}$ with $t(n) > n$, constants $\alpha, \beta \in (0.01, 0.99)$ with $\alpha + \beta \in (0.01, 0.99)$, and $m(n) = \ceil{2t^2n^{2\beta}}$. Suppose that for every pair of PPT algorithms $(\algS, \algD)$, there exists a negligible function $negl$ such that, for all large enough $n$, there exists an $\F_2$-matrix $A$ of dimension $m \times n$ for which both of these conditions are true: 
    \begin{itemize}[itemsep=0pt]
        \item $\Pr[\algS(A, As+e) = s^\prime\  \mathrm{and}\  \wt(As^\prime - (As + e)) \leq 2 m/n^{\alpha+\beta}]  \leq negl(n)$, where $s \gets \Field{2}^n$, $e \gets Ber(n^{-(\alpha + \beta)})^{m}$.
        \item $\abs{\Pr[\algD(A, c^T) = 1] - \Pr[\algD(A, r^TA) = 1]} \leq negl(n)$, where $c \gets \Field{2}^n, r \gets \cW(m,n^{\beta})$.
    \end{itemize}    
    Then, for the Search LPN problem with sample count $t(n)$, secret dimension $n$, and noise parameter $n^{-\alpha}$, no PPT algorithm has non-negligible advantage.
\end{corollary}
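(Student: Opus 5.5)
We argue by contrapositive, instantiating \cref{thm:main}. Suppose some PPT $\Slpn$ solves Search LPN with $t(n)$ samples, dimension $n$ and noise $n^{-\alpha}$ with non-negligible advantage $\eps(n)$; that is, $\eps(n)\ge 1/n^c$ for some constant $c$ and infinitely many $n$. We set $\eta = n^{-(\alpha+\beta)}$ and $\theta\cdot m = n^{\beta}$ (rounded to an integer), so that $\theta = n^\beta/m$. The hypothesis of \cref{thm:main} then reads $t^2\theta^2 m = t^2 n^{2\beta}/m \le 1/2 < 1$ for $m = \ceil{2t^2 n^{2\beta}}$.

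The LPN noise rate this produces is $\tau = (1-(1-2n^{-(\alpha+\beta)})^{n^\beta})/2$. Since $n^{-(\alpha+\beta)}\cdot n^\beta = n^{-\alpha}\to 0$, we have $\tau = n^{-\alpha}(1\pm o(1))$, and in particular $\tau \le n^{-\alpha}$ by Bernoulli's inequality. This is the main obstacle: the theorem asks for a solver at noise exactly $\tau$, but we have one at $n^{-\alpha}$. We handle it by adding noise. Given an instance with noise $\tau<n^{-\alpha}$, we add independent $\Ber(\delta)$ noise with $\delta = (n^{-\alpha}-\tau)/(1-2\tau)$, which is well defined and lies in $[0,1/2)$. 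By the piling-up computation (\cref{fact:sum_bern} generalized to unequal rates), the total noise becomes exactly $\Ber(n^{-\alpha})$. Thus $\Slpn$ composed with this preprocessing is a Search LPN solver at noise $\tau$ with the same advantage $\eps$ and polynomial running time $T$. One could alternatively redefine $\alpha$ slightly, but the noise-adding route is exact.

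Applying \cref{thm:main}, we obtain algorithms $(\algS,\algD)$ such that for every $A\in\F_2^{m\times n}$, either $\algS$ succeeds with probability at least $\eps/2-o(\eps)$ at noise $\eta$ with the $2\eta m = 2m/n^{\alpha+\beta}$ weight check, or $\algD$ has advantage at least $\eps/2t$ against $r\gets\cW(m,n^\beta)$. Both quantities are non-negligible, since $t$ is polynomial. It remains to check that both algorithms are PPT. $\algD$ runs in time $T+\poly(n,m)$. $\algS$ runs in time $\poly(n,m,T,(1/\eps)^{\log_m(m/\eps)})$. With $\eps\ge n^{-c}$ and $m\ge n^2$, we get $\log_m(m/\eps) = 1+\log(1/\eps)/\log m \le 1+c/2 = O(1)$, so $(1/\eps)^{O(1)} = \poly(n)$. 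One minor point is that $\algS,\algD$ as constructed depend on $\eps$ through the threshold; we fix them using the polynomial lower bound $n^{-c}$ in place of $\eps$, which only enlarges the threshold by a constant factor and preserves the claim's bounds.

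Finally, for each of the infinitely many $n$ where $\eps(n)\ge n^{-c}$, every $A$ violates at least one of the two conditions with a non-negligible gap: either $\algS$ succeeds with probability at least $n^{-c}/3$, or $\algD$ has advantage at least $n^{-c}/(2t)$. The hypothesis requires that for all large $n$ some $A$ satisfies both conditions with a negligible bound, which contradicts this. Hence no PPT algorithm has non-negligible advantage.
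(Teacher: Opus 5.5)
Your proposal is correct and is essentially the paper's proof. Both take the contrapositive of \cref{thm:main} with $\eta=n^{-(\alpha+\beta)}$ and $\theta=n^{\beta}/m$, use Bernoulli's inequality to get $\tau\le n^{-\alpha}$ and then artificially raise the noise, check $t^2\theta^2m\le 1/2$, and note that $(1/\eps)^{\log_m(m/\eps)}=\poly(n)$ for $\eps=1/\poly(n)$. You also spell out details the paper leaves implicit: the exact extra rate $\delta=(n^{-\alpha}-\tau)/(1-2\tau)$, the bound $m\ge n^2$, and hardwiring $n^{-c}$ into $\algS$'s threshold. Two small corrections. First, if $n^\beta$ is not an integer it should be rounded down, so that $\tau\le n^{-\alpha}$ still holds. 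Second, hardwiring $n^{-c}$ can enlarge the threshold by an $O(\log n)$ factor rather than a constant one; this is harmless, since the abort probability only decreases and $2^{O(\log n)}$ iterations remain polynomial.
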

\begin{proof}
  This corollary can be proven using the contrapositive statement of \cref{thm:main} with an appropriate setting of variables as follows:
  \begin{align*}
    \eta = \frac{1}{n^{\alpha+\beta}} \quad \theta = \frac{n^\beta}{m}
  \end{align*}
  The resulting noise rate of LPN from the theorem can be bounded as follows using the Bernoulli inequality,
    \[
        \tau = \frac{1}{2}\left(1-\left(1-\frac{2}{n^{\alpha+\beta}}\right)^{n^{\beta}}\right) \leq n^{-\alpha}.
    \]
    The noise rate in the LPN instance can later be artificially increased to be equal to $n^{-\alpha}$ as required. Next, we can also check that the condition of $t^2\theta^2m < 1$ required by \cref{thm:main} (that the probability of intersection between two rows is bounded) is also satisfied ,
    \[
    \dfrac{(t)^2(n^{\beta})^2}{m}  \leq \dfrac{(t^2)n^{2\beta}}{2t^2n^{2\beta}} = \dfrac{1}{2} < 1.
    \] 
    So if there is a polynomial-time algorithm for Search LPN that has non-negligible advantage $\eps(n)$ on infinitely many values of $n$, then by \cref{thm:main}, it contradicts the assumption in the statement of the corollary. Note that the runtime of $\algS$ resulting from the theorem is $\poly\left(n, m, T, (1/\eps)^{\log_m(m/\eps)}\right)$, and when $\eps = 1/\poly(n)$, we have that  $(1/\eps)^{\log_m(m/\eps)} = \poly(n)$, so the runtime is still polynomial in $n$. That concludes the proof of the corollary.
\end{proof}


\paragraph{Non-Triviality of Hardness.} For the reductions in \cref{thm:main} and \cref{cor:lpnmain} to be non-trivial and useful for cryptography, they have to show that under a reasonable worst-case hardness assumption, we can infer the hardness of LPN with parameters for which it is possible to solve statistically.

For the Search LPN problem to be statistically feasible to solve, the parameters should be in a regime where the instance $(A,As+e)$ contains enough information about the secret $s$. Using the fact that random linear codes achieve Shannon capacity for the binary symmetric channel~\cite{guruswami2012essential}, an LPN instance with secret size $n$, noise rate $\eta$, and sample size $t \gg n/(1-h(\eta))$, where $h$ is the binary entropy function, satisfies this condition. For $\eta = n^{-\alpha}$ with $\alpha \in (0.01, 0.99)$, this condition is satisfied if $t > n + \omega(n^{1-\alpha} \log{n})$.

To the best of our knowledge, to design a pair of algorithms $(\algS,\algD)$ such that $\algS$ can decode the code generated by a matrix $A$ from random noise while $\algD$ can distinguish random syndromes generated by it from uniform is to pick the optimal algorithm for each task separately, just keep the more efficient one as $\algS$ or $\algD$, and let the other algorithm simply not do anything. If $\alpha, \beta \in (0.01,0.99)$, the best such algorithms run in time $\exponential(O(n^{1-(\alpha + \beta)}))$ for $\algS$ and $\exponential(O(n^{\beta}))$ for $\algD$~\cite{Prange62}.
Thus, if $(\alpha+\beta)\in(0.01,0.99)$, the best algorithms for the worst-case problem are still sub-exponential time.





\paragraph{Parameter settings.} It remains to show that we can set our parameters so that the above conditions are simultaneously satisfied. Some of these are as follow. For each setting of parameters below, we make the corresponding assumption as stated in the hypothesis of \cref{cor:lpnmain}, and the conclusion is the result of applying the corollary with these parameters.

\begin{enumerate}
  \item Setting $\alpha = 0.1$, $\beta = 0.1$, and $m = \ceil{8n^{2.2}}$ implies the hardness of LPN with $t = 2n$ samples and noise rate $n^{-0.1}$.
  \item Setting $\alpha = 0.5$, $\beta = 0.45$, and $m = \ceil{8n^{2.9}}$ implies the hardness of LPN with $t = 2n$ samples and noise rate $n^{-0.5}$.
  \item Setting $\alpha = 0.8$, $\beta = 0.1$, and $m = \ceil{2n^{6.2}}$ implies the hardness of LPN with $t = n^3$ samples and noise rate $n^{-0.8}$.
\end{enumerate}

Note that the last two parameter settings above, together with known search-to-decision reductions for LPN~\cite{JACM:Regev09,C:AppIshKus07,C:MicMol11}, are sufficient for use in Alekhnovich's construction of a Public-Key Encryption (PKE) scheme from the hardness of LPN~\cite{FOCS:Alekhnovich03}. There are other viable settings of the parameter $\beta$ for which this would be true as well. The following is one specific corollary of the many that are correspondingly possible.

\begin{corollary}
  \label{cor:pke}
  If the hypothesis of \cref{cor:lpnmain} is satisfied for the parameter setting $t = 2n$, $\alpha = 0.5$, $\beta = 0.45$ and $m = \ceil{8n^{2.9}}$, then there is a secure construction of PKE.
\end{corollary}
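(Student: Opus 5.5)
The plan is to compose \cref{cor:lpnmain} with the known search-to-decision reductions for LPN and with Alekhnovich's construction~\cite{FOCS:Alekhnovich03}.

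\textbf{Step 1: search hardness.} The hypothesis of \cref{cor:lpnmain} holds for $t = 2n$, $\alpha = 0.5$, $\beta = 0.45$ and $m = \ceil{8n^{2.9}}$. These values satisfy the corollary's constraints:
\begin{itemize}
\item $t = 2n > n$ is a polynomial;
\item $\alpha, \beta, \alpha+\beta = 0.95 \in (0.01, 0.99)$;
\item $m = \ceil{2t^2 n^{2\beta}} = \ceil{2 \cdot 4n^2 \cdot n^{0.9}} = \ceil{8n^{2.9}}$.
\end{itemize}
So \cref{cor:lpnmain} gives that Search LPN with $2n$ samples, secret dimension $n$ and noise rate $n^{-1/2}$ is hard for every PPT algorithm.

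\textbf{Step 2: search to decision.} Next I would invoke the sample-preserving search-to-decision reduction of Micciancio--Mol~\cite{C:MicMol11}, or the reduction of Applebaum et al.~\cite{C:AppIshKus07}. This yields hardness of Decision LPN with noise rate $n^{-1/2}$ and $O(n)$ samples. If the reduction shrinks the secret dimension or the sample count by a constant factor, I would reparametrize $n$. Such rescalings only change the noise exponent by lower-order amounts, which can be absorbed by artificially raising the noise rate, as is done in the proof of \cref{cor:lpnmain}. Alternatively, one can apply the corollary with a slightly larger polynomial $t$; the parameter setting with $t = n^3$ listed above shows this is allowed.

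\textbf{Step 3: instantiate Alekhnovich.} Finally I would plug this decision assumption into Alekhnovich's PKE. Its security needs Decision LPN with $O(n)$ samples and noise rate $O(n^{-1/2})$. Correctness follows because the inner product of two independent noise vectors of rate $n^{-1/2}$ is biased away from $1/2$ by a constant.

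\textbf{Main obstacle.} The delicate point is bookkeeping in Step 2. The search-to-decision reductions must preserve the sample count linearly and must not lower the noise rate below what Alekhnovich's scheme tolerates. I expect to handle this by taking the reduction with polynomial sample overhead and using the extra room in the parameter settings of \cref{cor:lpnmain}.
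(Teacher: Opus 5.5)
Your argument is the one the paper intends. The paper gives no separate proof, only the remark that these parameters plus known search-to-decision reductions feed into Alekhnovich's scheme. Your proof goes through as long as you commit to a sample-preserving reduction such as the Micciancio--Mol one you cite first.

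Drop the fallbacks in Step~2 and in your ``main obstacle'' paragraph, however, because neither is available here:
\begin{itemize}
\item The hypothesis is only assumed at $m=\lceil 8n^{2.9}\rceil$, which in \cref{cor:lpnmain} forces $t=2n$. Using a larger $t$ would require the hypothesis at $m=\lceil 2t^2n^{0.9}\rceil$, which is not given.
\item A reduction with polynomial sample overhead needs roughly $\mathrm{poly}(n)/\eps^2$ search samples against a distinguisher of advantage $\eps$, far more than the $2n$ samples you have.
\end{itemize}
So sample preservation is essential to the argument, not a convenience.
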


\section{Acknowledgements}

We would like to thank Yuval Ishai for pointing us to relevant references on barriers to the statistical smoothing approach.

This work was supported by the National Research Foundation, Singapore, under awards no. NRF-NRFF14-2022-0010 and NRF-NRFI09-0005.

\bibliographystyle{alphaurl}
\bibliography{abbrev0,crypto,local}

\appendix

\section{Deferred Proof}

\pilinguplemma*

\begin{proof}
    \label{proof:piling_up_lemma}
    Let $Y_i := (-1)^{X_i}\in\{\pm1\}$. Then $\exp{Y_i} = (1-\eta)\cdot 1 + \eta\cdot(-1) = 1-2\eta$. Notice that,
\[
(-1)^{S} = Y_1 \cdot Y_2 \cdots Y_k \;.
\]
Since $Y_1, \ldots, Y_k$ are independently distributed, we have that 
\[
\exp{(-1)^S} \;=\; \exp{\prod_{i=1}^k (-1)^{X_i}}
\;=\; \prod_{i=1}^k \exp{(-1)^{X_i}}
\;=\; (1-2\eta)^k.
\]
From $\exp{(-1)^S} = \Pr[S=0]-\Pr[S=1] = 1 - 2\,\Pr[S=1]$.  
Hence 
\[
\Pr[S=1] \;=\; \frac{1-(1-2\eta)^k}{2}. 
\]
\end{proof}

\hybridargument*

\begin{proof}
    \label{proof:hybrid_argument}
    Suppose $s \gets \Field{2}^{n}, e \gets \mathrm{Ber}(\tau)^t$ throughout the proof. Recall that,
    \[
        Adv = 
        \Pr_{\substack{R \gets \cW(m, \theta \cdot m)^{t}}}[\Slpn(RA, RAs + e) = s].
    \]
    From $\Slpn$ having advantage $\eps$ and $Adv < \eps/2$, 
    \begin{equation}
        \label{eq:hybrid_advantage}
        \Pr_{\substack{B \gets \Field{2}^{t \times n}}}[\Slpn(B, Bs+e) = s] - \Pr_{\substack{\\ R \gets \cW(m, \theta \cdot m)^{t}}}[\Slpn(RA, RAs + e) = s] \geq  \eps/2.
    \end{equation}
    Note that $\mathcal{H}_{t}$ is exactly the distribution $RA$, while $\mathcal{H}_{0}$ is the uniform distribution.  Telescoping,
    \begin{align*}
        &\Pr_{\substack{B \gets \Field{2}^{t \times n}}}[\Slpn(B, Bs+e) = s] - \Pr_{\substack{\\ R \gets \cW(m, \theta \cdot m)^{t}}}[\Slpn(RA, RAs + e) = s]\\
        = &\sum_{i \in [0, t-1]} \Pr_{\substack{H_{i} \gets \mathcal{H}_{i}}}[\Slpn(H_{i}, H_{i}s + e) = s] - \Pr_{\substack{H_{i+1} \gets \mathcal{H}_{i+1}}}[\Slpn(H_{i+1}, H_{i+1}s + e) = s].
    \end{align*}
    There are two components in the equation, look at each component individually, by the law of total probability,
    \begin{align*}
        &\sum_{i \in [0, t-1]} \Pr_{\substack{H_{i} \gets \mathcal{H}_{i}}}[\Slpn(H_{i}, H_{i}s + e) = s] \\
        = &\ t \sum_{i \in [0, t-1]} \dfrac{1}{t}  \cdot \Pr_{\substack{H_{i} \gets \mathcal{H}_{i}}}[\Slpn(H_{i}, H_{i}s + e) = s]\\
        = &\ t \sum_{i \in [0, t-1]} \Pr_{X \gets [0, t - 1]}[X = i] \Pr_{\substack{H_{X} \gets \mathcal{H}_{X}}}[\Slpn(H_{X}, H_{X}s + e) = s\ |\ X = i]\\
        = &\ t \Pr_{\substack{i \gets [0, t-1]\\ H_{i} \gets \mathcal{H}_{i}}}[\Slpn(H_{i}, H_{i}s + e) = r] = t\cdot Adv_0.
    \end{align*}
    By a similar argument, 
    \[
        \sum_{i \in [0, t-1]} \Pr_{\substack{H_{i+1} \gets \mathcal{H}_{i+1}}}[\Slpn(H_{i+1}, H_{i+1}s + e) = s] = t\cdot Adv_1.
    \]
    Substitute into \cref{eq:hybrid_advantage}, and that concludes the claim
    \[
        t \cdot Adv_0 - t \cdot Adv_1 \geq \eps/2 \implies Adv_0 - Adv_1 \geq \eps/2t.
    \]
\end{proof}

\end{document}
